\documentclass[a4paper,USenglish,cleveref,thm-restate]{lipics-v2021}

\title{An Automata-Based Approach to Games with \texorpdfstring{$\omega$}{ω}-Automatic Preferences}

\author{Véronique Bruyère}{Université de Mons (UMONS), Belgium \and \url{https://informatique-umons.be/bruyere-veronique/}}{veronique.bruyere@umons.ac.be}{https://orcid.org/0000-0002-9680-9140}{}

\author{Emmanuel Filiot}{Université libre de Bruxelles (ULB), Belgium \and \url{https://verif.ulb.ac.be/efiliot/}}{efiliot@ulb.be}{https://orcid.org/0000-0002-2520-5630}{}

\author{Christophe Grandmont}{Université de Mons (UMONS), Belgium \and Université libre de Bruxelles (ULB), Belgium \and \url{https://chrisgdt.github.io/}}{christophe.grandmont@umons.ac.be}{https://orcid.org/0009-0009-4573-0123}{}

\author{Jean-François Raskin}{Université libre de Bruxelles (ULB), Belgium \and \url{https://verif.ulb.ac.be/jfr/}}{jean-francois.raskin@ulb.be}{https://orcid.org/0000-0002-3673-1097}{Supported by Fondation ULB (\url{https://www.fondationulb.be/en/}) and the Thelam Foundation.
}

\authorrunning{V.\ Bruyère, E.\ Filiot, C.\ Grandmont, and J.-F.\ Raskin}

\ccsdesc[500]{Theory of computation~Automata over infinite objects}
\ccsdesc[300]{Software and its engineering~Formal methods}
\ccsdesc[500]{Theory of computation~Solution concepts in game theory}
\ccsdesc[300]{Theory of computation~Exact and approximate computation of equilibria}

\newtheorem{problem}[theorem]{Problem}

\keywords{Games played on graphs, Nash equilibria, \texorpdfstring{$\omega$}{ω}-automatic relations, rational synthesis, alternating parity automata, threshold problem, zero-sum games}

\nolinenumbers
\pdfoutput=1
\hideLIPIcs

\funding{This work has been supported by the Fonds de la Recherche Scientifique – FNRS under Grant n° T.0023.22 (PDR Rational).}

\Copyright{Véronique Bruyère, Emmanuel Filiot, Christophe Grandmont, and Jean-François Raskin}

\usepackage{multicol}
\usepackage{tikz}
\usepackage{graphics}
\usepackage{todonotes}
\usepackage{centernot}% \centernot for \notR in macros.tex
\crefname{enumi}{Condition}{Conditions}

\newtheorem*{problems*}{Problems}

\mathchardef\mhyphen="2D

\newcommand{\<}{\langle}
\renewcommand{\>}{\rangle}
\renewcommand{\|}{\upharpoonright}

\newcommand{\N}{\mathbb{N}}

\newcommand{\ssetminus}{\! \setminus \!}
\newcommand{\val}{\mathsf{Val}}

\newcommand{\aut}[1]{\ensuremath{\mathcal{#1}}}

\newcommand{\lang}{\mathcal{L}}

\newcommand{\NBW}{\text{NBW}}

\newcommand{\DBW}{\text{DBW}}

\newcommand{\ABW}{\text{ABW}}

\newcommand{\DPW}{\text{DPW}}

\newcommand{\DPWs}{\text{DPWs}}

\newcommand{\APW}{\text{APW}}
\newcommand{\APWs}{\text{APWs}}

\newcommand{\DFA}{\text{DFA}}
\newcommand{\DFAs}{\text{DFAs}}

\newcommand{\PO}{\mathsf{Pareto}}

\newcommand{\R}{\ensuremath{\mathrel{\ltimes}}}
\newcommand{\notR}{\ensuremath{\mathrel{\centernot\ltimes}}}

\newcommand{\arena}{A}

\newcommand{\game}{\mathcal{G}}

\newcommand{\players}{\mathcal{P}}

\newcommand{\plays}{\mathsf{Plays}}

\newcommand{\NEout}{\mathsf{NE}_{\text{out}}}

\newcommand{\bsigma}{\bar{\sigma}}

\newcommand{\outcome}[1]{\<#1 \>}
\newcommand{\outcomefrom}[2]{\outcome{#1}_{#2}}

\newcommand{\bigO}[1]{\mathcal{O}(#1)}

\newcommand{\nl}{$\mathsf{NL}$}
\newcommand{\nlHard}{\nl{}-hard}
\newcommand{\nlComplete}{\nl{}-complete}

\newcommand{\parity}{$\mathsf{Parity}$}
\newcommand{\parityHard}{$\mathsf{Parity}$-hard}
\newcommand{\parityComplete}{$\mathsf{Parity}$-complete}

\newcommand{\pspace}{$\mathsf{PSPACE}$}
\newcommand{\pspaceHard}{\pspace{}-hard}
\newcommand{\pspaceComplete}{\pspace{}-complete}

\newcommand{\exptime}{$\mathsf{EXPTIME}$}

\newcommand{\twoExptime}{$\mathsf{2EXPTIME}$}

\newcommand{\expspace}{$\mathsf{EXPSPACE}$}

\usetikzlibrary{
  automata,
  arrows,
  positioning,
  shapes.geometric,
  calc
}
\tikzset{
edge with arrows/.style = {
    ->,
    >=stealth,
    shorten >=1pt,
},
directed/.style = {
    edge with arrows,
    node distance=2.3cm,
    on grid,
    semithick,
    double distance=1.5pt,
},
automaton/.style = {
    directed,
    auto,
    initial text={},
    pin distance = 1ex,
    every pin edge/.style = {
        draw=none
    },
    every state/.style={
      minimum size=1.5mm,
      inner sep=1pt,
    }
},
system/.style = {
    state,
    circle,
    minimum size=0mm,
    inner sep=5pt,
},
system2/.style={
    state,
    ellipse,
    minimum size=0mm,
    inner sep=2pt,
},
environment/.style = {
    state,
    rectangle,
    minimum size=0mm,
    inner sep=8pt,
},
environment2/.style = {
    state,
    diamond,
    minimum size=0mm,
    inner sep=4pt,
},
}

\begin{document}

\maketitle

\begin{abstract}
    This paper studies multiplayer turn-based games on graphs in which player preferences are modeled as $\omega$-automatic relations given by deterministic parity automata. This contrasts with most existing work, which focuses on specific reward functions. We conduct a computational analysis of these games, starting with the threshold problem in the antagonistic zero-sum case. As in classical games, we introduce the concept of value, defined here as the set of plays a player can guarantee to improve upon, relative to their preference relation. We show that this set is recognized by an alternating parity automaton (\APW{}) of polynomial size. We also establish the computational complexity of several problems related to the concepts of value and optimal strategy, taking advantage of the $\omega$-automatic characterization of value. Next, we shift to multiplayer games and Nash equilibria, and revisit the threshold problem in this context. Based on an \APW{} construction again, we close complexity gaps left open in the literature, and additionally show that cooperative rational synthesis is \pspaceComplete{}, while it becomes undecidable in the non-cooperative case.
\end{abstract}

\section{Introduction}
\label{section:intro}

Game theory on graphs has emerged as a foundational framework for the formal verification and synthesis of reactive systems~\cite{Games-on-Graphs,lncs2500,Pnueli-Rosner-1989}. Whereas the classical paradigm of \emph{zero-sum games} presupposes a strict antagonism between a system and an adversarial environment, contemporary applications--such as autonomous robotic fleets or distributed network protocols--demand models that capture interactions among agents with individual, not necessarily antagonistic, and frequently heterogeneous, objectives. This evolution motivates the analysis of \emph{non-zero-sum games}, in which agents (or players) are assumed to act rationally, a behavior classically formalized via solution concepts such as \emph{Nash equilibria} (NEs)~\cite{Nash50,OsborneRubinstein1994}.

Traditionally, the analysis of such games is divided into two distinct frameworks: the \emph{qualitative} setting, in which objectives are Boolean (typically $\omega$-regular conditions such as parity conditions or LTL specifications), and the \emph{quantitative} setting, in which agents seek to maximize numerical payoffs (for instance, mean-payoff or discounted-sum objectives)~\cite{ChatterjeeDHR10,Ehrenfeucht-Mycielski-1979}. Although the \emph{threshold problem}~\cite{vonNeumannMorgenstern1953}--that is, deciding whether a player can guarantee a given payoff--is by now well understood for these classes of objectives~\cite{Bruyere17}, the landscape becomes significantly fragmented when considering complex or hybrid winning conditions. Existing approaches are frequently tailored to specific numerical payoff functions or assume that the underlying preference relation forms a total preorder~\cite{BouyerBMU15,LeRoux-Pauly-Equilibria}, which limits their applicability to more general settings.

To provide a unifying viewpoint, recent work has advocated the use of \emph{automata-based preference relations}~\cite{BruyereGrandmontRaskin25-mfcs}. In this framework, player preferences are defined relationally rather than numerically: they are \emph{$\omega$-automatic relations}~\cite{lncs2500,BookSakarovitch}, i.e., relations specified by deterministic parity automata (\DPWs{}) that synchronously read pairs $(x,y)$ of infinite paths in the underlying game graph and accept them whenever $y$ is preferred to $x$, denoted by $x \R y$.
As explained in \cite[end of Section 2]{BruyereGrandmontRaskin25-mfcs}, this framework captures many cases from classical games, as it encapsulates all $\omega$-regular objectives and several quantitative objectives (e.g., quantitative reachability, limsup, or discounted-sum). The generality of this framework allows both (1) \emph{flexibility} and (2) \emph{modularity}:

(1) As \DPWs{} are closed under Boolean operations, our framework captures complex
combinations of classical qualitative criteria. Indeed, a player can combine several objectives possibly of different types (such as a Büchi objective with a quantitative reachability one). Moreover, they can combine them using different orders (such as a lexicographic order or a component-wise order). Crucially, this relational
approach does not impose the standard assumption that preferences must form a total preorder. This is already the case with the partial component-wise order, but also when preferences over three or more options are determined by multiple criteria. When aggregating distinct metrics such as cost, performance, and reliability, pairwise trade-offs can conflict across a set of alternatives to yield non-transitive preferences, which our relational framework naturally models without requiring structural constraints.
Preferences may also depend on the history of actions played so far,
e.g., a player may prefer reaching target $B$ over target $C$ if and only if target $A$ has been visited in the past. Again, such a dynamic can be captured by the state space of a DPW representing the preference relation. Additional examples are given below in \Cref{ex:reachability-examples}.

(2) Similar to model checking, where the system and the specification are modeled separately (e.g., as a Kripke structure and a temporal logic formula, respectively), we adopt a modular approach in this paper in which preference relations are specified independently of the game arena.
In contrast, in classical multiplayer games, preferences are typically homogeneous and hardcoded into the arena (e.g., via priorities or numerical values). Decoupling the dynamics of the game from the preference relations naturally allows for compositional representations and can yield exponentially more succinct descriptions of players' preferences. Indeed, each player's preference relation is specified independently by its own automaton without needing to account for the preferences of the other players, an issue that is inherent in classical approaches where all preferences are embedded in the arena.

This paper substantially further explores the line of research on $\omega$-automatic preference relations, initiated in \cite{BruyereGrandmontRaskin25-mfcs}. While NEs have been extensively studied for specific classes of objectives, a general and computationally efficient characterization of equilibrium outcomes for automata-based preference relations is lacking. To address this limitation, we revisit the analysis of zero-sum and multiplayer games with arbitrary $\omega$-automatic preference relations. Our approach relies on a generalized notion of \emph{value} that captures what a player can guarantee against an opponent, and serves as a foundation for studying NEs and related synthesis problems. This leads to optimal complexity bounds for NE existence (possibly with constraints). Moreover, this notion of value allows us to derive important new results about cooperative and non-cooperative rational synthesis, as introduced in~\cite{FismanKL10,KupfermanPV16}.

\subparagraph*{Contributions}
We begin by revisiting the notion of \emph{value} for \emph{two-player zero-sum} games endowed with $\omega$-automatic preference relations $\R$: player~$1$ uses preference relation $\R$ while player~2 uses the complementary relation. In this enriched setting, the value that player~1 can enforce is no longer a single scalar quantity, but rather a set of infinite paths that they can guarantee to strictly improve upon with respect to $\R$. We first prove that this value set is $\omega$-regular and can be recognized by an alternating parity automaton of polynomial size (\cref{thm:values-automatic}). We then extend the notion of \emph{optimal strategy} to this framework. These results allow us to characterize the computational complexity of several classical decision problems. This includes including deciding the threshold problem for a given lasso path, deciding whether a threshold with a positive answer exists at all (\cref{thm:threshold-problem}), verifying the existence of an optimal strategy or finite-memory strategy optimality (\cref{thm:optimal-strategy}).

Building on these zero-sum results, we then analyze \emph{NEs} in \emph{multiplayer} games. We show that the set of
NE outcomes can be characterized as the intersection of the values of the coalitions opposed to each player, and thus recognized by an alternating parity automaton of polynomial size (\cref{thm:apw-for-ne-outcome}). This characterization leads to an automata-based procedure for checking the NE existence and allows us to establish that the NE existence problem, with or without constraints (given as lasso paths), is \pspaceComplete{} (\cref{thm:NEthreshold}). This improves the known \twoExptime{} upper bound of~\cite{BruyereGrandmontRaskin25-mfcs} and closes this significant complexity gap. We also get \pspaceComplete{}ness for constraints specified by $\omega$-regular or LTL conditions~(\cref{thm:LTLconstraint}).

Finally, we investigate the problem of \emph{rational synthesis}, originally introduced in the classical (qualitative) setting in~\cite{FismanKL10,KupfermanPV16}. This problem asks whether one can synthesize a strategy for a system in a way that satisfies a specification in an environment composed of rational agents whose joint behavior forms an NE in response to the system's strategy. A distinction is made between \emph{cooperative} rational synthesis, where the synthesized strategy is required to realize the specification under the assumption that the rational environment collaboratively selects a favorable equilibrium, and \emph{non-cooperative} rational synthesis, where the specification must be satisfied for \emph{all} induced equilibria.
In our general setting of games with $\omega$-automatic preference relations, we prove that cooperative rational synthesis is \pspaceComplete{}, while non-cooperative rational synthesis becomes undecidable, when the specification is a lasso path constraint (\cref{thm:crs-pspace-ncrs-undec}). This undecidability result is particularly striking and unexpected; it exhibits a sharp complexity-theoretic separation from the classical qualitative case~\cite{ConduracheFGR16}. We further consider the associated \emph{verification problems}: given a strategy for the system defined as a Mealy machine, we verify whether the specification is guaranteed to be satisfied against a rational environment. Specifically, the cooperative (resp.\ non-cooperative) rational verification asks whether the specification holds for at least one (resp.\ all) NEs formed by the environment's agents in response to this fixed strategy. We show that both verification problems are \pspaceComplete{} (\cref{thm:verification-rational-synthesis}).

\subparagraph*{Related work}
The formal study of synchronized relations on words is grounded in the theory of rational relations~\cite{BookSakarovitch} and $\omega$-automatic structures, which have been extensively developed for applications in logic (e.g., first-order logic)~\cite{BlumensathGradel2000,Gradel2020,Hodgson1983,KhoussainovNerode1995}. The structural and algorithmic properties of these relations have been the subject of dedicated research~\cite{Bergstrasser-Ganardi-2023,BruyereGrandmontRaskin25-mfcs,KaiserRubinBarany-2008}. In the latter context, automata-based preference relations for games were recently explored in~\cite{BruyereGrandmontRaskin25-mfcs}, while related approaches link $\omega$-automatic aggregate functions to quantitative objectives~\cite{BansalChaudhuriVardi2022,Rajasekaran-Bansal-Vardi-2023}. Furthermore, automatic relations have proven effective for modeling imperfect information in games~\cite{Berwanger-Doyen-2023,Bozzelli-Maubert-Pinchinat-2015}. The algorithmic treatment of such expressive specifications relies here on the theory of alternating automata, whose origins~\cite{ChandraKozenStockmeyer1981,MiyanoHayashi1984} and applications (e.g., to linear temporal logics~\cite{KupfermanS06,Vardi95}) provide essential tools for algorithmic studies.

In the domain of zero-sum games, the analysis of quantitative objectives is well-established \cite{Bruyere17,ChatterjeeDHR10,Ehrenfeucht-Mycielski-1979}. The threshold problem, in particular, has been widely studied for both Boolean and classical quantitative objectives~\cite{BouyerBMU15,BrihayeBGT21,BrihayePS13,Bruyere17,Jurdzinski98}. Research has extended to preference relations with an ordered structure~\cite{BouyerBMU15,BruyereGrandmontRaskin25-mfcs,LeRoux-Pauly-Equilibria}. Moving to the non-zero-sum setting, rational behavior is captured by NEs~\cite{Nash50}. The NE threshold problem is studied for specific reward functions~\cite{Bouyer-Brenguier-Markey-2010,Bruyere17,ChatterjeeHJ06,Gutierrez-MPRSW21,Ummels08,UmmelsWojtczak2011-meanpayoff} but a general theory is still missing. Recent work has examined reward functions associating integers with subsets of satisfied objectives of the same type~\cite{Feinstein-Kupferman-Shenwald2025,KupfermanPV16}. Automata-theoretic approaches have also been employed to characterize NE outcomes in quantitative settings~\cite{KlimosLST12,Rajasekaran-Bansal-Vardi-2023}, and the set of outcomes compatible with admissible strategies~\cite{BrenguierRS14}. In the context of $\omega$-recognizable relations satisfying specific hypotheses (total preorder, prefix-independence), NE outcomes were characterized in~\cite{BruyereGrandmontRaskin25-mfcs} via a restricted notion of value. Finally, the problem of rational synthesis, originally developed in~\cite{FismanKL10}, has attracted significant attention~\cite{AlmagorKupfermanPerelli2018,Bruyere21,GutierrezNPW23,Kupferman-Shenwald-2022} and been extended to quantitative objectives~\cite{BriceRB23-rational-verif,BruyereGrandmontRaskin24-concur}.

\section{Preliminaries}
\label{section:preliminaries}

In this section, we recall the concept of multiplayer games with $\omega$-automatic preference relations as introduced in~\cite{BruyereGrandmontRaskin25-mfcs}, as well as several useful definitions and an illustrative example.

We assume that the reader is familiar with the classical notion of parity and B\"uchi automata over infinite words. We use the classical terminology ``XYW'' where $X \in \{D,N,A\}$ is either deterministic (D), nondeterministic (N), or alternating (A), and $Y \in \{B,P\}$ is either B\"uchi (B) or parity (P).\footnote{$W$ stands for infinite Words.} For example, a deterministic parity automaton is denoted \DPW{}. For a parity automaton, we denote by $\alpha$ its priority function that assigns a priority in $\{0,1,\ldots,k\}$ to each state, and we denote its index by $|\alpha|$, equal to the maximum priority~$k$. A run is accepting if the maximum priority seen infinitely often is even. For a B\"uchi automaton, we denote by $F$ the set of accepting states that a run needs to visit infinitely often to be accepted. See, e.g.,~\cite{lncs2500,handbook-automata-kupferman18,handbook-Automata-WilkeSchewe21}, for more details. Given an automaton $\aut{A}$, we denote by $|\aut{A}|$ its size, the number of its states.

We consider binary relations $\mathord{\R} \subseteq \Sigma^\omega \times \Sigma^\omega$ on infinite words over a finite alphabet $\Sigma$. A relation $\R$ is \emph{$\omega$-automatic} if the language $L_{\R} = \{ (a_1,b_1)(a_2,b_2)\dots\in(\Sigma\times\Sigma)^\omega\mid (a_1a_2\dots,b_1b_2\dots)\in \mathop{\R}\}$ is accepted by a \DPW{} over the alphabet $\Sigma \times \Sigma$, that is, $L_{\R}$ is an $\omega$-regular language~\cite{BookSakarovitch}. The automaton reads pairs of letters by advancing synchronously on the two words. In the sequel, we write $x \R y$ (resp.\ $x \notR y$) instead of $(x,y) \in \mathop{\R}$ (resp.\ $(x,y) \not\in \mathop{\R}$).

An \emph{arena} is a tuple $\arena = (V,E,\players,(V_i)_{i\in\players})$ where $V$ is a finite set of vertices, $E \subseteq V \times V$ is a set of edges, $\players$ is a finite set of players, and $(V_i)_{i\in\players}$ is a partition of $V$, with $V_i$ the set of vertices owned by player~$i$. We assume, w.l.o.g., that each $v \in V$ has at least one successor, i.e., there exists $v' \in V$ such that $(v,v') \in E$. A \emph{play} $\pi\in V^\omega$ (resp.\ a \emph{history} $h\in V^*$) is an infinite (resp.\ finite) sequence of vertices $\pi_0\pi_1\dots$ such that $(\pi_k,\pi_{k+1})\in E$ for all $k$. The set of all plays is denoted $\plays$, and we write $\plays(v)$ for the set of plays starting with vertex $v$. A play $\pi= \mu\nu^\omega$, where $\mu\nu$ is a history and $\nu$ a cycle, is a \emph{lasso}. Its \emph{length} $|\pi|$ is the number of vertices of~$\mu\nu$.\footnote{To have a well-defined length $|\pi|$, we assume that $\pi = \mu\nu^\omega$ with $\mu\nu$ as short as possible.}

A \emph{game} $\game = (\arena, (\R_i)_{i\in\players})$ is an arena equipped with $\omega$-automatic relations $\R_i$ over the alphabet $V$ (each accepted by a \DPW{}), one for each player~$i$, called their \emph{preference relation}. For any two plays $\pi, \pi'$, player~$i$ \emph{prefers} $\pi'$ to $\pi$ if $\pi \R_i \pi'$.
In this paper, each $\R_i$ is just an $\omega$-automatic relation without any additional hypotheses. We denote by $|V|$ the size of a game $\game$.

Let $\arena$ be an arena. A \emph{strategy} $\sigma_i:V^*V_i\rightarrow V$ for player~$i$ maps any history $hv \in V^*V_i$ to a successor of vertex $v$. The set of all strategies of player~$i$ is denoted~$\Gamma_i$. A strategy $\sigma_i$ is \emph{finite-memory} if it can be encoded by a Mealy machine $\aut{M}$~\cite{lncs2500}. The memory size of $\sigma_i$ is the size $|\aut{M}|$ of $\aut{M}$, equal to the number of its states. A strategy $\sigma_i$ is \emph{memoryless} if $\sigma_i(hv) = \sigma_i(v)$ for all histories $h$; it is a particular case of finite-memory strategy encoded by a Mealy machine $\mathcal{M}$ with size $|\aut{M}| = 1$. Given a strategy $\sigma_i \in \Gamma_i$ and a history $h$, we define the strategy $\sigma_{i\|h}$ from $\sigma_i$ by $\sigma_{i\|h}(h') = \sigma_i(hh')$ for all histories $hh' \in V^*V_i$.

A play $\pi = \pi_0\pi_1 \dots$ is \emph{consistent} with a strategy $\sigma_i$ if $\pi_{k+1} = \sigma_i(\pi_0 \dots \pi_k)$ for all $k$ such that $\pi_k \in V_i$.
Consistency is naturally extended to histories. Given a strategy $\sigma_i \in \Gamma_i$ and a vertex $v \in V$, we denote by $\plays_{\sigma_i}(v)$ the set of all plays starting at $v$ that are consistent with $\sigma_i$. A tuple of strategies $\bsigma = (\sigma_i)_{i\in\players}$ with $\sigma_i \in \Gamma_i$ for each player~$i$, is called a \emph{strategy profile}. The play $\pi$ starting from an initial vertex $v$ and consistent with each $\sigma_i$ is denoted by $\outcomefrom{\bsigma}{v}$ and called \emph{outcome}.

We suppose that the reader is familiar with the classical concepts of two-player \emph{zero-sum} games with $\omega$-regular objectives and of winning strategy~\cite{Games-on-Graphs,lncs2500}.

Several examples are given in the introduction and in \cite[end of Section 2]{BruyereGrandmontRaskin25-mfcs} to demonstrate the flexibility and generality of $\omega$-automatic preference relations. In the next example, we further illustrate this framework by examining multiple variations of reachability objectives.

\begin{example}\label{ex:reachability-examples}
    Given an arena $\arena$, we first consider \emph{min-cost reachability}, where the objective is to reach a target set $T \subseteq V$ while minimizing the number of vertices. Given $x \in V^\omega$, let $c(x)$ denote the minimal number of vertices to reach the target $T$ along $x$ (with $c(x) = +\infty$ if the target is never reached). We define the relation $\R$ such that for all $x, y \in V^\omega$, $x \R y$ if $c(x) \geq c(y)$. A \DPW{} accepting $\R$ is depicted in \cref{fig:min-cost-reachability}, such that the label, e.g., ``$(T,\neg T)$'' represents the set of labels $(v,u)$ with $v \in T$ and $u \not\in T$, and the numbers inside the states represent their priority.
    The relation $\R$ induces a total preorder on $V^\omega$. A strict order can be obtained by using strict inequality $c(y) > c(x)$.

    Second, we consider \emph{max-reward reachability}, where the objective is to maximize the number of visited vertices to reach the target $T$. The relation $\R$ is then defined as $x \R y$ if $c(x) \leq c(y) \lor c(x) = +\infty$, and a \DPW{} accepting $\R$ is depicted in \cref{fig:max-reward-reachability}. This relation induces a total preorder that is the inverse of the preorder for min-cost reachability (except regarding $+\infty$, where failing to reach the target is the worst-case scenario). 

    Finally, we shift to partial preorders, with a combination of \emph{two Boolean reachability} objectives. Given two targets $T_1, T_2\subseteq V$, we define the relation $\R$ as $x \R y$ if for all $i \in \{1,2\}$, if $x$ visits $T_i$ then $y$ visits $T_i$. Since $x$ visiting only $T_1$ and $y$ visiting only $T_2$ are incomparable with respect to $\R$, we get a partial preorder. The state space of a \DPW{} $\aut{A}$ accepting $\R$ is represented in \cref{fig:lattice_automaton}, as a grid where the vertical (resp.\ horizontal) axis corresponds to $x$ (resp.\ $y$) and indicates the visited targets among $T_1$ and $T_2$. Green states have priority $0$ and red states have priority $1$. Thus, $x \R y$ holds exactly when the automaton ultimately stays in a green state while reading $(x,y)$.
    \lipicsEnd
\end{example}

\begin{figure}[t]
    \centering
    \begin{minipage}[c]{0.56\textwidth}
        \centering
        \begin{minipage}[t]{0.48\linewidth}
            \centering
            \begin{tikzpicture}[automaton,scale=.7,every node/.style={scale=.7},node distance=1.5]
                \node[initial below,system] (q0) {$0$};
                \node[system] (q1) [right=of q0] {$0$};
                \node[system] (q2) [left=of q0] {$1$};
                
                \path (q0) edge            node {$(*, T)$} (q1)
                           edge            node[above] {$(T, \neg T)$} (q2)
                           edge[loop above]node {$(\neg T, \neg T)$} (q0)
                      (q1) edge[loop above]node {$(*, *)$} (q1)
                      (q2) edge[loop above]node {$(*, *)$} (q2);
            \end{tikzpicture}
            \caption{A \DPW{} accepting the relation $\R$ for min-cost-reachability with  target $T$.}
            \label{fig:min-cost-reachability}
        \end{minipage}
        \hfill
        \begin{minipage}[t]{0.48\linewidth}
            \centering
            \begin{tikzpicture}[automaton,scale=.65,every node/.style={scale=.65},node distance=1.55]
              \node[initial,system] (q0) {$0$};
              \node[system] (q1) [below=of q0] {$1$};
              \node[system] (q2) [right=of q1] {$0$};
              \node[system] (q3) [right=of q0] {$0$};
              
              \path (q0) edge            node[left] {$(T, \neg T)$} (q1)
                         edge            node {$(\neg T, T)$} (q3)
                         edge[loop above]node {$(\neg T, \neg T)$} (q0)
                         edge            node {$(T, T)$} (q2)
                    (q1) edge[loop left] node {$(*, \neg T)$} (q1)
                         edge            node {$(*, T)$} (q2)
                    (q2) edge[loop above]node {$(*, *)$} (q2)
                    (q3) edge[loop above]node {$(\neg T, *)$} (q3);
            \end{tikzpicture}
            \caption{A \DPW{} accepting the relation $\R$ for max-reward reachability.}
            \label{fig:max-reward-reachability}
        \end{minipage}

        \vspace{1em}

        \begin{minipage}[t]{1.0\linewidth}
            \centering
            \begin{tikzpicture}[automaton,scale=.65,every node/.style={scale=.65}]
                \node[system] (v0) {$v_0$};
                \node[environment] (v1) [above right=.5cm and 1cm of v0] {$v_1$};
                \node[environment] (v2) [below right=.5cm and 1cm of v0] {$v_2$};

                \path (v0) edge[loop above] (v0)
                           edge (v1)
                           edge (v2)
                      (v1) edge[loop right] (v1)
                           edge[bend right=20] (v2)
                      (v2) edge[loop right] (v2)
                           edge[bend right=20] (v1);
            \end{tikzpicture}
            \addtocounter{figure}{1}
            \caption{An arena with round (resp.\ square) vertices owned by player~$1$ (resp.\ player~$2$).}
            \label{fig:example-optimal-strat}
        \end{minipage}
    \end{minipage}
    \hfill
    \begin{minipage}[c]{0.4\textwidth}
        \centering
        \begin{tikzpicture}[>=stealth,scale=.95,every node/.style={scale=.8}]
            \draw[thick] (0,0) -- (4,0) node[right] {$y$};
            \draw[thick] (0,0) -- (0,4) node[above] {$x$};

            \foreach \i/\lbl in {0/$\varnothing$, 1/$\{T_1\}$, 2/$\{T_2\}$, 3/{$\{T_1, T_2\}$}} {
                \node at (\i+0.5, -0.3) {\small \lbl};
                \node at (-0.5, \i+0.5) {\small \lbl};
            }

            \foreach \x in {0,1,2,3} {
                \foreach \y in {0,1,2,3} {
                     % y subset of x?
                     \pgfmathsetmacro{\isSafe}{
                        (\y==0) ? 1 : (
                            (\y==1 && (\x==1 || \x==3)) ? 1 : (
                                (\y==2 && (\x==2 || \x==3)) ? 1 : (
                                    (\y==3 && \x==3) ? 1 : 0
                                )
                            )
                        )
                     }

                     \ifnum\isSafe=1
                         \fill[green!10!white] (\x,\y) rectangle ++(1,1);
                         \node[green!40!black, font=\tiny, align=center] at (\x+0.5,\y+0.5) {$\checkmark$\\Accept};
                     \else
                         \fill[red!10!white] (\x,\y) rectangle ++(1,1);
                         \node[red!40!black, font=\tiny, align=center] at (\x+0.5,\y+0.5) {$\times$\\Reject};
                     \fi
                }
            }
            \draw[step=1cm, gray!30, thin] (0,0) grid (4,4);
        \end{tikzpicture}
        \addtocounter{figure}{-2}
        \caption{A high-level representation of the states of a \DPW{} accepting the relation $\R$ for the combination of two Boolean reachability objectives.
        }
        \label{fig:lattice_automaton}
        \addtocounter{figure}{1}
    \end{minipage}
\end{figure}

While the previous example illustrates preference relations satisfying specific properties such as being a preorder, it is important to note that our framework \emph{does not impose} such structural assumptions. Unlike classical game-theoretic settings where preferences are typically modeled as partial or total preorders, we define $\R$ as an arbitrary $\omega$-automatic relation. Therefore, this \emph{general framework} encompasses standard ordered objectives as well as their combinations (several examples are given in~\cite{BruyereGrandmontRaskin25-mfcs}), but also captures non-standard relations. Consequently, all subsequent concepts studied in this paper, such as the values and Nash Equilibria, remain operational without any additional hypotheses. The intrinsic meaning of ``preference'' relies solely on the definition of $\R$: it can represent a standard preorder or strict partial order, but also cyclic relations~\cite{BrihayeGHR20} or more exotic ones. If specific properties such as reflexivity or transitivity are required for a particular application, checking their satisfaction is known to be \nlComplete{}~\cite{BruyereGrandmontRaskin25-mfcs}.

\section{Two-Player Zero-Sum Games}
\label{section:zero-sum}

In this section, we study the purely antagonistic setting of two-player zero-sum games. We have one preference relation $\R_1$ for player~$1$ and the reverse relation $\R_2$ for player~$2$: $x \R_1 y$ if and only if $y \R_2 x$~\cite{BouyerBMU15}. Intuitively, whenever one player perceives a play as an improvement over another, the opponent perceives it as a degradation. In this section, we denote by $\R$ the relation $\R_1$. We thus fix a two-player zero-sum game $\game = (\arena,\R)$ on the arena $\arena = (V,E,\{1,2\},(V_1,V_2))$, with $\R$ an $\omega$-automatic preference relation accepted by a \DPW{} $\aut{A} = (Q,q_0,\delta,\alpha)$. Our main goal is to study the following \emph{threshold problem}.

\begin{problem}[Threshold problem]\label{problem:threshold}
    Given a game $\game = (\arena, \R)$, a vertex $v\in V$, and a lasso $\pi \in \plays(v)$ (called threshold), the \emph{threshold problem} is to decide whether there exists a strategy $\sigma_1$ for player~$1$ such that for all plays $\rho \in \plays(v)$ consistent with $\sigma_1$, we have $\pi \R \rho$.
\end{problem}

A standard approach to solving the threshold problem is to introduce the notion of value. In game theory, it is defined as the optimal payoff that player~1 can guarantee against player~2. This concept was introduced as early as in the seminal work of von Neumann and Morgenstern~\cite{vonNeumannMorgenstern1953}.
In classical settings, the value is typically a real scalar $c$ and player~1 can guarantee a payoff arbitrarily close to $c$, and even reach it in case of existence of optimal strategy. Once the value $c$ is computed, it immediately yields a solution to the threshold problem: one simply verifies whether the specified threshold is 
less than $c$. This classical notion of value can equivalently be viewed not only as a real scalar $c$, but also as the set of all payoffs less than $c$,  that is, the set of outcomes that player~1 can guarantee to improve upon. Hence, in our setting where no assumptions are imposed on $\R$ (except being $\omega$-automatic), we define the value in terms of the set of plays $\pi$ for which player~$1$ can guarantee an improvement. In this way, we eliminate the scalar $c$, and replace it by the set $\val_1(v)$ as presented in the next definition. Building on this alternative definition, an optimal strategy is one that guarantees an outcome at least as preferable as any play in $\val_1(v)$. This coincides with the classical notion of optimality, as illustrated in the first case of \cref{ex:reachability-examples-sequel} below.

\begin{definition}\label{def:values}
Let $\game = (\arena,\R)$ be a two-player zero-sum game and $v \in V$ be a vertex.
\begin{itemize}
    \item The \emph{value of $v$} is the set $\val_1(v) = \{\pi \in \plays(v) \mid \exists \sigma_1 \in \Gamma_1, \forall \rho \in \plays_{\sigma_1}(v),\, \pi \R \rho\}$.
%   \item \textcolor{blue}{The \emph{value of $v$} for player~$2$ is the set $\val_2(v) = \{\pi \in \plays(v) \mid \exists \sigma_2 \in \Gamma_1, \forall \rho \in \plays_{\sigma_2}(v),\, \rho \R \pi\}$.}
    \item An \emph{optimal strategy from $v$} for player~$1$ is a strategy $\sigma_1^* \in \Gamma_1$ such that for all $\pi \in \val_1(v)$ and $\rho \in \plays_{\sigma_1^*}(v)$, we have $\pi \R \rho$.
\end{itemize}
\end{definition}

With this definition, the threshold problem reduces to checking whether the given lasso $\pi$ belongs to $\val_1(v)$. While a non-optimal strategy may suffice as a solution to the threshold problem for a specific threshold, an optimal strategy $\sigma_1^*$, if it exists, acts as a universal witness: it provides a solution to the threshold problem for every lasso within the value set. Let us illustrate \cref{def:values} using the $\omega$-automatic relations $\R$ from \cref{ex:reachability-examples}.

\begin{example}\label{ex:reachability-examples-sequel}
    Let us first consider the relation $\R$ for min-cost reachability, accepted by \DPW{} of \cref{fig:min-cost-reachability}. In this context, the threshold problem consists of deciding, given a lasso $\pi \in \plays(v)$, whether there exists a strategy $\sigma_1$ such that $\pi \R \rho$, for all $\rho$ starting from $v$ and consistent with $\sigma_1$. That is, a strategy that guarantees $c(\rho) \leq c(\pi)$. Let $c^*$ be the best threshold\footnote{That is, the lowest cost $c$.} $c$ that player~1 can guarantee; then the set $\val_1(v)$ is equal to $ \{\pi \mid c^* \leq c(\pi) \}$. This problem is well-studied, $c^*$ is the classical notion of value, and an optimal strategy always exists~\cite{BrihayeGHM17}.

    Let us now consider the relation $\R$ for max-reward reachability. Unlike the previous case, an optimal strategy for player~$1$ does not always exist: see the arena $A$ depicted in \cref{fig:example-optimal-strat}, where player~$1$ owns $v_0$ and $T = \{v_1,v_2\}$. One can verify that $\val_1(v_0) = \plays(v_0)$. Any strategy $\sigma_1$ for player~$1$ either loops $k$ times on $v_0$ and then leaves $v_0$, or loops on $v_0$ indefinitely. In the first case, for any play $\rho$ consistent with $\sigma_1$, we have $\pi \not\R \rho$ with $\pi = v_0^{k+1} v_1^\omega$, showing that $\sigma_1$ is not optimal. In the second case, the unique play $\rho$ consistent with $\sigma_1$ is $v_0^\omega$, and again $\pi \not\R \rho$ with $\pi = v_0v_1^\omega$. Nevertheless, it should be noted that, even if there is no optimal strategy, the threshold problem still has a solution for every lasso $\pi \in \plays(v_0)$. Indeed, as $\val_1(v_0) = \plays(v_0)$, player~$1$ can always pick a strategy looping on $v_0$ sufficiently many times.

    Finally, let us consider the relation $\R$ for the combination of two Boolean reachability objectives. 
    We illustrate the absence of optimal strategy for player~$1$ with the same arena of \cref{fig:example-optimal-strat}, and $T_i = \{v_i\}$, $i = 1,2$. One can verify that the value $\val_1(v_0)$ contains all plays that do not visit both vertices $v_1$ and $v_2$. A candidate optimal strategy $\sigma_1^*$ would need to guarantee any consistent play $\rho$ to satisfy $\pi_1 \R \rho$ and $\pi_2 \R \rho$, with $\pi_1 = v_0v_1^\omega$ and $\pi_2 = v_0v_2^\omega$, i.e., $\rho$ would have to visit both $v_1$ and $v_2$. However, player~2 can always prevent such a visit, so there exists no optimal strategy. Similarly to the previous case, the threshold problem remains solvable for any specific lasso $\pi \in \val_1(v_0)$. Since any such threshold $\pi$ visits either $T_1$ or $T_2$, player~$1$ can always visit the same target as $\pi$.
    \lipicsEnd
\end{example}

\begin{remark}
    In \cref{def:values}, one might be tempted to replace $\val_1(v)$ by its maximal plays. This is not correct in our general setting of preference relations where no conditions are imposed. Therefore, maximal elements may not exist, as shown in the previous example with max-reward reachability. Our set-based definition avoids this issue entirely, yielding a robust and well-defined notion of value for arbitrary preference relations.
\end{remark}

For the rest of the section, $\game = (\arena,\R)$ is a two-player zero-sum game, with arena $\arena = (V,E,\{1,2\},(V_1,V_2))$, such that the preference relation $\R$ is accepted by a \DPW{} $\aut{A}= (Q,q_0,\delta,\alpha)$ with a priority function $\alpha$. An instance of the threshold problem is a lasso denoted by $\pi$.
A \emph{key result} is the following lemma stating that the value $\val_1(v)$ is an $\omega$-regular set.

\begin{lemma}\label{thm:values-automatic}
    Let $\game$ be a game and $v_0 \in V$ be a vertex. Then, the value $\val_1(v_0)$ is accepted by an \APW{} $\aut{B}$ with $|V|^2 \cdot |\aut{A}|$ states and a priority function of index $|\alpha|$.
\end{lemma}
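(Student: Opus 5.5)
The plan is to build an \APW{} $\aut{B}$ that reads a play $\pi \in \plays(v_0)$ and simulates, through its alternation, the two-player game played on the product of the arena with $\aut{A}$, where the first component of each letter read by $\aut{A}$ is fixed by the input $\pi$. The existential choices of $\aut{B}$ will correspond to player~$1$'s moves and the universal choices to player~$2$'s moves. For a fixed $\pi$, the membership condition $\exists \sigma_1 \forall \rho \in \plays_{\sigma_1}(v_0),\ \pi \R \rho$ is exactly the statement that player~$1$ wins a parity game. In that game, $\pi$ is read letter by letter, the players jointly build $\rho$ in the arena, and the winning condition is acceptance by $\aut{A}$ of the pair word $(\pi_0,\rho_0)(\pi_1,\rho_1)\dots$. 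Since $\aut{A}$ is deterministic, its state is determined by the history, so the winning condition is a parity condition on the product.

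Concretely, I will take as states of $\aut{B}$ the triples $(u,w,q)$. Here $u \in V$ is the last letter of $\pi$ read, used to check that the input is a play of the arena starting at $v_0$; $w \in V$ is the current vertex of $\rho$; and $q \in Q$ is the current state of $\aut{A}$. This gives $|V|^2\cdot|\aut{A}|$ states, as claimed. Since the first letters $\pi_0$ and $\rho_0$ are both $v_0$, the initial state is $(v_0,v_0,\delta(q_0,(v_0,v_0)))$, and $\aut{B}$ reads $\pi$ from position $1$ on (or I shift indices by a dummy initial state handled by folding it into the transition). On reading letter $a$ from state $(u,w,q)$, the transition is defined as follows:
\begin{itemize}
\item if $(u,a)\notin E$, it is $\mathit{false}$;
\item otherwise, if $w \in V_1$, it is the disjunction $\bigvee_{(w,w')\in E} (a,w',\delta(q,(a,w')))$;
\item otherwise ($w \in V_2$), it is the conjunction $\bigwedge_{(w,w')\in E} (a,w',\delta(q,(a,w')))$.
\end{itemize}
The priority of $(u,w,q)$ is $\alpha(q)$, so the index is $|\alpha|$. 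To satisfy the requirement that $\pi$ be a play, I would rather not spend an extra sink state; instead I encode rejection by the transition $\mathit{false}$, which alternating automata allow.

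For correctness, I will use the standard acceptance game of an \APW{} on a word $\pi$. This game is a parity game whose positions are (state, position in $\pi$), and in it the existential player resolves the disjunctions and the universal player resolves the conjunctions. Because $\pi$ is fixed, this acceptance game is isomorphic to the game on the arena in which player~$1$ builds $\rho$ against player~$2$, with the parity condition induced by $\aut{A}$ on $(\pi,\rho)$. A strategy $\sigma_1$ such that every consistent $\rho$ satisfies $\pi\R\rho$ translates into a winning strategy for the existential player in the acceptance game, and conversely. The converse direction uses the fact that strategies in the acceptance game depend on histories, which is exactly what arena strategies $\sigma_1$ can depend on; moreover, $q$ and $u$ are functions of the history, because $\pi$ is fixed and $\aut{A}$ is deterministic. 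Positional determinacy of parity games is not even needed, since I only use the equivalence of the existence of winning strategies.

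The main obstacle I expect is the bookkeeping of this isomorphism, in particular showing that the plays of the acceptance game correspond exactly to plays $\rho \in \plays_{\sigma_1}(v_0)$ paired with $\pi$, and that the alignment of indices at the first letter is correct. The remaining parts of the argument, namely the state count and the index, are immediate from the construction.
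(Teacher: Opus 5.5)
Your proposal is correct and follows essentially the paper's construction: the same state space $V \times V \times Q$, disjunctive or conjunctive branching according to the owner of the current vertex of $\rho$, priority $\alpha(q)$ (hence index $|\alpha|$), and the same correspondence between accepting runs (equivalently, winning strategies in the acceptance game) and strategies $\sigma_1$. The differences are cosmetic. You feed $\aut{A}$ the new pair $(a,w')$ immediately, whereas the paper feeds it the previous pair $(v,u)$ with a one-step lag, which does not affect the parity condition. Also, the paper's initial state $(v_0,v_0,q_0)$ uses the same implicit convention you made explicit, namely that reading starts at position $1$.
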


\begin{proof}
     We define an \APW{} $\aut{B}$ over the alphabet $V$ that accepts $\val_1(v_0)$ as follows. While reading a word $\pi\in V^\omega$, it uses existential transitions to guess a strategy for player~$1$ and universal transitions to check that whatever player~2's choices, the resulting play $\rho$ is preferred to $\pi$. To do so, it simulates the \DPW{} $\mathcal{A}$ over the synchronized product of $\pi$ and $\rho$.
    This is correct as $\mathcal{A}$ is deterministic.
  Formally, $\aut{B} = (S,s_0,\gamma,\beta)$ is constructed with the set of states $S = V \times V \times Q$, the initial state $s_0 = (v_0,v_0,q_0)$, and the priority function defined by $\beta(v,u,q) = \alpha(q)$. The transition function is defined for a state $(v,u,q)$ and a letter $v' \in V$ (where $(v,v') \in E$) as follows: $\gamma((v,u,q),v')$ equals $\bigvee_{(u,u') \in E} (v',u',q')$ if $u \in V_1$, and $\bigwedge_{(u,u') \in E} (v',u',q')$ if $u \in V_2$, with $q' = \delta(q,(v,u))$.
    Hence, $\aut{B}$ has $|V|^2 \cdot |\aut{A}|$ states and a priority function $\beta$ of index $|\alpha|$.

    Consider a run of $\aut{B}$ on a word $\pi \in V^\omega$. It is a tree in which vertices belonging to $V_1$ (resp.\ $V_2$) have only one child (resp.\ have all their children), by definition of $\gamma$. Such a run can be seen as a strategy $\sigma_1$ for player~1. It is accepting if and only if all its branches satisfy the parity condition given by $\beta$. Consider a branch: on its first component, it is checked whether $\pi$ is a play; on its second component, another play $\rho$ consistent with $\sigma_1$ is explored; both plays $\pi$ and $\rho$ must start with $v_0$, by definition of $s_0$; finally on the third component, it is checked whether $\pi \R \rho$ by definition of $\beta$. In other words, the considered run is accepted by $\aut{B}$ if and only if all plays $\rho$ consistent with $\sigma_1$ satisfy $\pi \R \rho$, i.e., $\pi \in \val_1(v_0)$
\end{proof}

From~\cref{thm:values-automatic}, we can derive the complexity classes of the threshold problem and the problem of deciding the existence of a threshold~$\pi$ for which a solution to the threshold problem exists. We also solve the related problem of verifying whether a finite-memory strategy is a solution to the threshold problem for a given threshold $\pi$. A problem \parityComplete{} refers to being equivalent, i.e., polynomially inter-reducible, to solving a zero-sum parity game.

\begin{theorem}[restate=thmthresholdproblem,name=]
\label{thm:threshold-problem}
    \begin{itemize}
        \item The threshold problem is \parityComplete{}, and admits finite-memory strategies as solutions, with memory size $|\aut{A}| \cdot |\pi|$.
        \item Deciding whether there exists a lasso $\pi \in \plays(v)$ for which the threshold problem has a solution is \pspaceComplete{}.
        \item Given a lasso $\pi$ and a strategy $\sigma_1\in \Gamma_1$ defined by a Mealy machine, verifying whether $\sigma_1$ is a solution to the threshold problem for $\pi$ is \nlComplete{}.
    \end{itemize}
\end{theorem}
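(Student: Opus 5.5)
For the first item, I would fix the lasso $\pi = \mu\nu^\omega$ and build a zero-sum parity game on the product of the arena with $\aut{A}$ and a cyclic counter tracking the position in $\pi$. Its vertices are triples $(u,q,k)$ with $u \in V$, $q \in Q$ and $k \in \{0,\dots,|\pi|-1\}$, where $k$ indexes $\mu\nu$ and wraps back into $\nu$. Ownership is inherited from $u$. A move $u \to u'$ updates $q' = \delta(q,(\pi_k,u))$ and advances $k$ deterministically. Priorities are taken from $\alpha(q)$. This game has polynomial size $|V|\cdot|\aut{A}|\cdot|\pi|$ and the same index $|\alpha|$. Player~1 wins it from $(v,q_0,0)$ if and only if some $\sigma_1$ forces $\pi \R \rho$ for every consistent $\rho$; this uses that $\aut{A}$ is deterministic, so the $Q$-component exactly tracks the run on $(\pi,\rho)$. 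This gives membership in \parity{}. Since parity games are positionally determined, a memoryless winning strategy in the product gives a strategy in the arena whose memory is the pair $(q,k)$, hence of size $|\aut{A}|\cdot|\pi|$.

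For hardness, I would reduce a parity game on the arena with priority function $p$ to the threshold problem. I would take $\R$ defined by ``$y$ satisfies the parity condition $p$'', which is accepted by a \DPW{} that ignores the first component and reads the priority of the second component, and choose any lasso $\pi$ from $v$. The one point to check is that the second component is read one step late ($q' = \delta(q,(v,u))$); this is harmless because priorities can be placed on states that record the last letter read.

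For the second item, the upper bound comes from \cref{thm:values-automatic}: a threshold exists if and only if $\val_1(v) \neq \emptyset$, because a nonempty $\omega$-regular set contains a lasso. Emptiness of an \APW{} is decidable in \pspace{}, for instance by converting to an \NBW{} of exponential size on the fly and checking nonemptiness in \nl{} in that size. \textbf{PSPACE-hardness is the main obstacle.} My first attempt would be a reduction from universality of \NBW{}s, or from \APW{} emptiness, exploiting the fact that $\pi$ is chosen existentially while the arena and $\aut{A}$ encode universal and existential branching. A concrete route is to let player~2 pick a play $\rho$ encoding a word, and to let $\pi$ encode a guessed object; the \DPW{} synchronously checks $\pi$ against $\rho$, for example comparing $\pi$ with a disjunction of choices made by player~1 to simulate alternation. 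Then $\val_1(v)\neq\emptyset$ would capture nonemptiness of an alternating automaton, which is \pspaceHard{} even for alternating safety or reachability automata over a suitable alphabet encoded by arena vertices.

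For the third item, I would form the product of the arena, the Mealy machine, $\aut{A}$ and the lasso counter. It is a one-player graph in which only player~2 chooses. $\sigma_1$ fails if and only if there is a reachable cycle whose maximal priority is odd, which is decidable in \nl{} by guessing the lasso of the product. Since \nl{} is closed under complement, verification is in \nl{}. \nl{}-hardness follows from graph reachability: take a trivial player~1 and a relation stating that $y$ never visits a target, so that $\sigma_1$ is a solution if and only if the target is unreachable.
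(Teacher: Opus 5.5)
Your arguments for the first and third items, and for the \pspace{} upper bound of the second item, are sound. Your hardness reductions there are simpler than the paper's and equally valid: for the first item, a relation that ignores $x$ and checks the parity condition on $y$ (the paper uses an extra vertex $v_0'$); for the third, graph unreachability via a safety relation together with $\mathsf{NL}=\mathsf{coNL}$ (the paper uses \DBW{} universality).

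The genuine gap is the \pspaceHard{}ness of the second item. You give no reduction, only a plan, and the ``concrete route'' you describe assigns the roles the wrong way round. The problem has the shape $\exists\pi\,\exists\sigma_1\,\forall\rho$. Non-universality of an \NBW{} $\aut{B}$ has the shape ``there is a word $w$ such that no $r\in Q^\omega$ is an accepting run on $w$''. So the existentially chosen word must be carried by the threshold $\pi$, and the universally quantified candidate run must be produced by player~2 inside $\rho$. If instead player~2 generates the word in $\rho$, you get $\exists\pi\,\exists\sigma_1\,\forall w$. That does not express non-universality, and it does not express universality either, since player~1 would have to build a run online without seeing the rest of $w$. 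Moreover, ``comparing $\pi$ with a disjunction of choices made by player~1'' is not yet a construction: you never say how $\pi$ is itself a play of the arena, nor what the \DPW{} actually checks.

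The missing idea is that, with the right assignment, the \DPW{} only has to check a \emph{deterministic} property of the synchronized pair: that $r$ is not an accepting run of $\aut{B}$ on $w$. A polynomial-size deterministic automaton does this by following the transitions of $\aut{B}$ along $(w,r)$. It accepts if $r$ ever takes a transition absent from $\aut{B}$, and otherwise accepts iff $r$ visits $F$ only finitely often.

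The paper makes both kinds of plays available from one start vertex. Player~1 at $v_0$ moves either to $v_1$, from which player~2 can generate any word of $\Sigma^\omega$, or to $q_0$, from which player~2 can generate any sequence of states. Then $x\R y$ holds iff $x=v_0v_1w$, $y=v_0r$ with $r\in Q^\omega$, and $r$ is not an accepting run on $w$. Player~1 must move to $q_0$, which hands the choice of $r$ to player~2, so $\val_1(v_0)=v_0v_1(\Sigma^\omega\setminus\lang(\aut{B}))$. By your own remark that a nonempty $\omega$-regular set contains a lasso, a threshold exists iff $\aut{B}$ is not universal, which gives \pspaceHard{}ness.
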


\begin{proof}[Proof (Sketch)]
    We only give sketches of the proofs for the \parity{} and \pspace{} memberships. The full proof is given in~\cref{app:thresholdproblems}.
    The threshold problem amounts to checking whether a given lasso $\pi$ belongs to $\val_1(v)$, i.e., to solving a two-player zero-sum parity game~\cite{handbook-Automata-WilkeSchewe21} obtained as the product of the lasso $\pi$ and the \APW{} \aut{B} of \cref{thm:values-automatic}.
    The second problem amounts to decide whether $\val_1(v) \neq \varnothing$. Checking that $\aut{B}$ is non-empty is in \pspace{}~\cite{DaxKlaedtke08,Leucker-Sanchez-2010}: construct on the fly an equivalent NBW of exponential size, while verifying its non-emptiness with an \nl{} algorithm.
\end{proof}

\Cref{ex:reachability-examples-sequel} showed the existence of a game with no optimal strategy. Therefore, we study two related problems: verifying whether a finite-memory strategy is optimal, deciding whether there exists an optimal strategy. Both problems can be solved, as consequences of \cref{thm:values-automatic}. Their proofs are given in~\cref{app:optimalstrategies}.

\begin{theorem}[restate=thmoptimalstrategy,name=]
\label{thm:optimal-strategy} 
    \begin{itemize}
    \item Given a strategy $\sigma_1 \in \Gamma_1$ given by a Mealy machine, verifying whether $\sigma_1$ is an optimal strategy is \pspaceComplete{}.
    \item The problem of deciding whether there exists an optimal strategy $\sigma_1^*$ is in \twoExptime{} and \pspaceHard{}. The lower bound holds even for one-player games. In case of existence, there exists a finite-memory solution $\sigma_1^*$, with memory size doubly exponential in $|V|$, $|\aut{A}|$, and $|\alpha|$.
    \end{itemize}
\end{theorem}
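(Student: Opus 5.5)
The plan has two parts: verifying optimality of a given Mealy machine, and deciding whether an optimal strategy exists. Both rest on the \APW{} $\aut{B}$ of \cref{thm:values-automatic} accepting $\val_1(v)$.

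\emph{Verification, upper bound.} A strategy $\sigma_1$ given by a Mealy machine $\aut{M}$ is optimal iff $\pi \R \rho$ for every $\pi \in \val_1(v)$ and every $\rho \in \plays_{\sigma_1}(v)$. I will decide the complement: is there a pair $(\pi,\rho)$ with $\pi \in \val_1(v)$, $\rho$ consistent with $\sigma_1$, and $\pi \notR \rho$? Read pairs $(\pi,\rho)$ over $V \times V$. The set of pairs with $\rho \in \plays_{\sigma_1}(v)$ is recognized by a deterministic safety automaton built from $\aut{M}$ and the arena. The set of pairs with $\pi \notR \rho$ is recognized by the complement of $\aut{A}$, a \DPW{} of the same size. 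The set of pairs with $\pi \in \val_1(v)$ is recognized by $\aut{B}$ lifted to read the first component. Taking the conjunction gives an \APW{} of polynomial size, and its non-emptiness is in \pspace{}, as in \cref{thm:threshold-problem}: convert on the fly to an exponential \NBW{} and test emptiness in \nl{}. Since \pspace{} is closed under complement, verification is in \pspace{}.

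\emph{Verification, lower bound.} I would reduce from universality of an \NBW{} or \APW{}, or directly from the \pspace{}-hard problem $\val_1(v) \neq \varnothing$ of \cref{thm:threshold-problem}. The idea is to add a fresh branch for player~$1$ whose only play is a ``bad'' play $\rho_\bot$, and to take $\sigma_1$ as the memoryless strategy choosing that branch. Define $\R$ so that $\pi \R \rho_\bot$ never holds for plays $\pi$ of the original part, while leaving $\R$ unchanged inside the original part. Then $\sigma_1$ is optimal iff the value restricted to the original part is empty. Some care is needed so that the new branch does not itself create new value elements. This can be arranged by making the choice at a fresh initial vertex, so that the original part's value is preserved.

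\emph{Existence.} The condition ``$\exists \sigma_1 \forall \pi \in \val_1(v) \forall \rho \in \plays_{\sigma_1}(v): \pi \R \rho$'' is a game condition on $\rho$. First build a universal automaton for $W=\{\rho \mid \forall \pi \in \val_1(v),\ \pi \R \rho\}$. For this, complement $\aut{B}$ (dualizing keeps it an \APW{}), convert it to an exponential \NBW{} $\aut{N}$ for $\plays(v)\setminus\val_1(v)$, and note that the complement of $W$ is the projection onto $\rho$ of $\{(\pi,\rho) \mid \pi \notin \lang(\aut{N}) \text{ and } \pi \notR \rho\}$. The obstacle is the inner complement: $\val_1(v)$ itself, not its complement, appears positively. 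So I will instead convert $\aut{B}$ directly into an \NBW{} of exponential size. Its product with the \DPW{} for $\notR$, projected on $\rho$, yields an exponential \NBW{} for $\overline{W}$; viewed as a universal co-Büchi automaton, it gives $W$. Determinize it to a \DPW{} of doubly exponential size with exponentially many priorities, and solve the parity game on the product with the arena. This gives \twoExptime{} and finite-memory optimal strategies of doubly exponential size. For \pspace{}-hardness in one-player games, I reuse the verification reduction: when player~$1$ owns every vertex and the only sensible optimal candidate is the fresh branch, existence coincides with emptiness of the original value. The main step to get right is the complementation/determinization bookkeeping that yields the doubly exponential bound.
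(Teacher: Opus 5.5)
Your upper bounds are correct and follow the paper's route. For verification, you test non-emptiness of a polynomial \APW{} combining $\aut{B}$, $\aut{M}$ and the complement of $\aut{A}$. For existence, you turn $\aut{B}$ into an exponential \NBW{}, take the product with the complement of $\aut{A}$, project on $\rho$, determinize and complement, and solve a parity game on the product with the arena. Your verification lower bound is also sound: a strategy that guarantees nothing is optimal iff the value is empty. In the game of \cref{fig:reduction-verifthreshold}, that strategy is the move $v_0 \to v_1$. The move to $q_0$ witnesses every element of $\val_1(v_0)$ simultaneously and is therefore \emph{always} optimal.

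The gap is the \pspace{}-hardness of the \emph{existence} problem. Reusing the verification reduction fails because nothing forces an optimal strategy onto your fresh branch. If $\val_1(v)=\varnothing$, every strategy (including every strategy of the original game) is vacuously optimal. If $\val_1(v)\neq\varnothing$, the fresh branch is not optimal, but a strategy entering the original part is optimal exactly when it is optimal in the original game. So existence in your game is equivalent to existence in the original game, and the fresh branch contributes nothing. On the game above, where moving to $q_0$ is always optimal, your reduction produces only positive instances. In the two-player setting you could patch this by also putting $\rho_\bot$ into the value, e.g.\ by letting $\rho_\bot \R \rho_\bot$. That forces every optimal strategy onto the fresh branch, so existence becomes equivalent to $\val_1(v)=\varnothing$.

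The one-player claim, however, needs a genuinely different source problem. In a one-player game, $\val_1(v)=\{\pi\in\plays(v)\mid \exists\rho\in\plays(v),\ \pi\R\rho\}$. Deciding $\val_1(v)\neq\varnothing$ is then just an \nl{} non-emptiness check on $\aut{A}$ restricted to pairs of plays. Giving player~$1$ the $\Sigma$- and $Q$-cliques therefore destroys the hardness of \cref{thm:threshold-problem}.

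The missing idea is that, with one player, hardness must come from the universal quantification over the value: a \emph{single} play has to dominate \emph{every} threshold in $\val_1(v)$ at once. The paper exploits exactly this with a reduction from non-emptiness of the intersection of \DFAs{} $\aut{D}_1,\dots,\aut{D}_n$, each assumed non-empty. The thresholds $i\bot^\omega$, $1\le i\le n$, form the value, and $i\bot^\omega\R 0w\bot^\omega$ holds iff $w\in\lang(\aut{D}_i)$. Hence an optimal strategy, i.e.\ one play $0w\bot^\omega$, exists iff $\bigcap_i\lang(\aut{D}_i)\neq\varnothing$. The relation is accepted by a \DPW{} of size $\sum_i|\aut{D}_i|+\bigO{1}$.
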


\begin{remark}
\label{rem:Val2}
In this section, we have presented the concept of value and optimal strategy for player~$1$. Those concepts can also be defined for player~$2$ (recall that the preference relation of player~$2$ is the reverse of $\R$): the value $\val_2(v)$ is the set $ \{\pi \in \plays(v) \mid \exists \sigma_2 \in \Gamma_2, \forall \rho \in \plays_{\sigma_2}(v),\, \rho \R \pi\}$, and a strategy $\sigma_2 \in \Gamma_2$ is optimal if for all $\pi \in \val_2(v)$ and $\rho \in \plays_{\sigma_2^*}(v)$, we have $\rho \R \pi$.
Notice that the reverse of $\R$ used in these definitions is also $\omega$-automatic: take the same \DPW{} as for $\R$, and swap the symbols of the pairs labeling its transitions. Hence, all the previous results remain true for player~$2$.
\end{remark}

\begin{remark}
    It is sometimes useful to study values and optimal strategies in subgames, i.e., when playing after some history (see, e.g.,~\cite{BrenguierRS14,BriceRB22,Gradel-Ummels-08}). In the context of this section, given a history $hv$ with $v \in V$, we define $\val_1(hv) = \{\pi \in \plays(v) \mid \exists \sigma_1 \in \Gamma_1, \forall \rho \in \plays_{\sigma_1}(v),\, h\pi \R h\rho\}$. Notice that the relation $\R^h$ defined by $(x,y) \in \mathop{\R^h}$ if and only if $(hx,hy) \in \mathop{\R}$, is also $\omega$-automatic. Indeed, it is accepted by the \DPW{} $\aut{A}$ for $\R$, where the initial state is replaced by the state reached after reading $(h,h)$. Thus, even if there are infinitely many histories, only $|\aut{A}|$ \DPWs{} are used to accept the various relations $\R^h$. The concept of optimal strategy is then easily adapted to subgames.
    \lipicsEnd
\end{remark}

\section{Nash Equilibria}
\label{section:nash-checking}

In this section, we come back to the general framework of multiplayer games $\game = (\arena, (\R_i)_{i\in\players})$, where for all $i\in\players$, $\R_i$ is an $\omega$-automatic preference relation for player~$i$. Let $\aut{A}_i$ be a \DPW{} with priority function $\alpha_i$, which accepts $\R_i$. We extend the threshold problem in this context based on Nash equilibria. A \emph{Nash Equilibrium} (NE) from an initial vertex $v$ is a strategy profile $\bsigma = (\sigma_i)_{i \in \players}$ such that for all players~$i$ and all strategies $\tau_i \in \Gamma_i$, we have $\outcomefrom{\bsigma}{v} \notR_i \outcomefrom{\tau_i,\bsigma_{-i}}{v}$, where $\bsigma_{-i}$ denotes $(\sigma_j)_{j \in \players \setminus \{i\}}$. So, NEs are strategy profiles where no single player has an incentive to unilaterally deviate from their strategy. When there exists a strategy $\tau_i \in \Gamma_i$ such that $\outcomefrom{\bsigma}{v} \R_i \outcomefrom{\tau_i,\bsigma_{-i}}{v}$, we say that $\tau_i$ (or, by notation abuse, $\outcomefrom{\tau_i,\bsigma_{-i}}{v}$) is a \emph{profitable deviation} for player~$i$. We denote by $\NEout(v)$ the set of all NE outcomes starting at $v$. In our general framework, NEs do not necessarily exist, see \cite[Example 2]{BruyereGrandmontRaskin25-mfcs}.

The threshold problem in the context of multiplayer games is formulated as follows.

\begin{problem}[NE threshold problem] \label{problem}
    Given a game $\game = (\arena, (\R_i)_{i \in \players})$, an initial vertex $v$, and a lasso $\pi_i \in \plays(v)$ for each player~$i$, the \emph{NE threshold problem} is to decide whether there exists an NE $\bsigma$ from $v$ with outcome $\rho=\outcomefrom{\bsigma}{v}$ 
    such that $\pi_i \R_i \rho$ for all $i \in \players$.
\end{problem}

We have a result similar to the key result of \cref{section:zero-sum} (\cref{thm:values-automatic}): the set of NE outcomes starting at some vertex $v$ is an $\omega$-regular set (see \Cref{thm:apw-for-ne-outcome} below). This is the key lemma of this section from which we will derive several important corollaries. To obtain this lemma, 
we associate with $\game$ several two-player zero-sum games $\game_i = (\arena, \R_i)$, for each $i \in \players$, defined as follows. The arena of $\game_i$ is the same as in $\game$, where the two opposed players are player~$i$, and the set of players $\players \ssetminus \{i\}$, seen as one player~$-i$, called \emph{coalition}, and where $\R_i$ is the preference relation of player~$i$.

In an NE, for each $i$, the strategy profile $\bsigma_{-i}$ must ensure that player~$i$ has no incentive to deviate; in other words, the coalition $-i$ must enforce an outcome that cannot be improved by player~$i$. This is formalized by the following definition of value for player~$-i$:
\[
\val_{-i}(v) = \{\pi \in \plays(v) \mid \exists \sigma_{-i} \in \Gamma_{-i}, \forall \rho \in \plays_{\sigma_{-i}}(v),\, \pi \notR_i \rho\}.
\]
Note that the relation $\notR_i$ is $\omega$-automatic. Indeed, it is accepted by the same \DPW{} as for $\R_i$, with its priorities increased by 1. It follows that the set $\val_{-i}(v)$ is accepted by an \APW{} $\aut{B}_i$ as given in \cref{thm:values-automatic} (adapted to player~$-i$).

We are now ready to state our key lemma. The set $\NEout(v)$ is an $\omega$-regular set accepted by an \APW{} constructed 
from the \APWs{} $\aut{B}_i$, $i \in \players$.

\begin{lemma}[restate=lemapwnashequilibria,name=]
\label{thm:apw-for-ne-outcome}
    Let $\game$ be a game and $v \in V$ be a vertex. The set $\NEout(v)$ is equal to $\bigcap_{i\in\players} \val_{-i}(v)$, and accepted by an \APW{} $\aut{B}$. Moreover,
    \begin{itemize}
        \item $|\aut{B}|$ has $1 + \sum_{i \in \players} (|V|^2 \cdot|\aut{A}_i|)$ states and a priority function of index $\max_{i\in\players} (|\alpha_i|+1)$.
        \item If $\lang(\aut{B}) \ne \varnothing$, then there exists an NE whose strategies are all finite-memory.
    \end{itemize}
\end{lemma}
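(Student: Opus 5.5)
The proof has three parts. First the set equality $\NEout(v) = \bigcap_{i\in\players} \val_{-i}(v)$, in both directions. Then the automaton, which is a routine intersection. Finally the finite-memory claim.

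\emph{Inclusion $\subseteq$.} Let $\rho = \outcomefrom{\bsigma}{v}$ be an NE outcome. For each player~$i$, take $\sigma_{-i} = \bsigma_{-i}$, seen as a coalition strategy in $\game_i$. Every play $\rho'$ consistent with $\bsigma_{-i}$ from $v$ is the outcome $\outcomefrom{\tau_i,\bsigma_{-i}}{v}$ of some strategy $\tau_i$. The NE property then gives $\rho \notR_i \rho'$, hence $\rho \in \val_{-i}(v)$.

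\emph{Inclusion $\supseteq$.} Let $\rho \in \bigcap_i \val_{-i}(v)$, and fix for each $i$ a witness coalition strategy $\sigma^i_{-i}$. I build a profile by the standard punishment construction. All players follow $\rho$. When the first deviation from $\rho$ occurs, at a vertex owned by player~$i$, the other players switch to $\sigma^i_{-i}$ applied to the full history from $v$; this is possible because the witness is a strategy from $v$ and the history so far is consistent with it.

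There is a subtle point here: the history $\rho_0\dots\rho_k$ before the deviation must itself be consistent with $\sigma^i_{-i}$. I handle this by fixing the coalition strategies so that they follow $\rho$ on the coalition's vertices of $\rho$. This is legitimate because $\rho$ is itself consistent with $\sigma^i_{-i}$: we may redefine $\sigma^i_{-i}$ along the prefixes of $\rho$ to agree with $\rho$, and the set of consistent plays then only shrinks to a subset, so the property $\rho \notR_i \cdot$ is preserved. More precisely, I take the profile in which, at coalition vertices on a prefix of $\rho$, the coalition plays what $\sigma^i_{-i}$ dictates. Since different $i$ give different coalition strategies, the cleanest way is to note that $\rho$ is consistent with each $\sigma^i_{-i}$ (taking $\rho'=\rho$ in the definition is irrelevant, so I instead modify each $\sigma^i_{-i}$ to agree with $\rho$ on prefixes of $\rho$). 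Such a modification may enlarge the set of consistent plays. To avoid this I define $\val_{-i}$ witnesses restricted to plays: I would argue that any play with a unilateral $i$-deviation from $\rho$ at step $k$ is consistent with the modified strategy, which plays $\rho$ until step $k$ and then plays $\sigma^i_{-i}$ applied to the history. I would verify carefully that the resulting set of plays is still covered by $\notR_i$.

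This bookkeeping is the main obstacle. If it fails, I would instead derive a contradiction via determinacy of the $\omega$-regular game $\game_i$ played on the product with $\aut{A}_i$ reading $(\rho,\cdot)$. That game has positional-on-product winning strategies, which also gives the finite-memory statement.

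\emph{Automaton.} The \APW{} $\aut{B}$ has a fresh initial state whose transitions on each letter are the conjunction of the initial transitions of the $\aut{B}_i$. This gives the stated state count, and the index is the maximum index of the $\aut{B}_i$, namely $|\alpha_i|+1$ after the priority shift for $\notR_i$.

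\emph{Finite memory.} If $\lang(\aut{B}) \neq \varnothing$, then $\aut{B}$ accepts some lasso $\rho$, since nonempty $\omega$-regular languages contain ultimately periodic words. The game obtained as the product of the lasso $\rho$ with $\aut{B}_i$ is a finite parity game, as in the proof of \cref{thm:threshold-problem}, so the coalition has a memoryless winning strategy there. This translates into a finite-memory $\sigma^i_{-i}$ with memory $|\rho|\cdot|\aut{A}_i|$. Combining these with a finite automaton that tracks $\rho$ and detects the first deviator yields a finite-memory NE.
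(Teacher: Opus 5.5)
Your $\subseteq$ direction, the intersection \APW{} with a fresh initial state (index $\max_{i}(|\alpha_i|+1)$), and the lasso-plus-memoryless-strategy argument for finite memory are the same as in the paper. The genuine gap is in $\supseteq$, exactly where you flag it, and your patch does not close it. Nothing in the definition of $\val_{-i}(v)$ makes $\rho$ consistent with the witness $\sigma^i_{-i}$, so the claim ``$\rho$ is itself consistent with $\sigma^i_{-i}$'' is unjustified. Redefining $\sigma^i_{-i}$ to follow $\rho$ also does not shrink $\plays_{\sigma^i_{-i}}(v)$. It replaces some plays by new ones: $\rho$ itself, and every play that leaves $\rho$ by a move of player~$i$ and then follows $\sigma^i_{-i}$. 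The witness property says nothing about these new plays. Determinacy of $\game_i$ only yields the same kind of witness, again with no consistency with $\rho$, so the fallback does not help either.

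In fact this gap cannot be closed, because the inclusion is false as stated. The paper's own proof has the same hole: it sets $\bsigma_{-i\|hu}=\tau_{-i\|hu}$ and concludes ``by construction'', which tacitly requires the prefix $h$ of $\pi$ to be consistent with $\tau_{-i}$. Here is a counterexample. Take $v\in V_2$ with successors $w,b$, and $w\in V_1$ with successors $c,d$, where $b,c,d$ are self-looping sinks. Let $\mathord{\R_1}=\{(vwc^\omega,vwd^\omega)\}$ (a strict partial order) and $\mathord{\R_2}=\varnothing$. For $\pi=vwc^\omega$, player~2 witnesses $\pi\in\val_{-1}(v)$ by moving from $v$ to $b$, since $\pi\notR_1 vb^\omega$, and $\pi\in\val_{-2}(v)$ holds trivially. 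Yet in any profile with outcome $\pi$, player~1 can deviate at $w$ to $d$, and $\pi\R_1 vwd^\omega$, so $\pi\notin\NEout(v)$. The second bullet fails too: add $(vwd^\omega,vwc^\omega)$ to $\R_1$ and take $\mathord{\R_2}=\{(vb^\omega,vwc^\omega),(vb^\omega,vwd^\omega)\}$; then $vwc^\omega,vwd^\omega\in\bigcap_i\val_{-i}(v)$ still holds, while no NE exists.

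The repair is to also require $\pi\in\plays_{\sigma_{-i}}(v)$ in the definition of $\val_{-i}(v)$. Then $\subseteq$ is unchanged, and your first-deviation punishment works verbatim. The pre-deviation history is a prefix of $\pi$, hence consistent with the witness, and the deviating move is player~$i$'s, so every resulting play is consistent with the witness. Consistency of $\pi$ itself gives $\pi\notR_i\pi$, which is needed since $\tau_i=\sigma_i$ is allowed. The \APW{} of \cref{thm:values-automatic} only needs an extra flag recording whether the simulated play still equals the input, with the coalition forced to follow the input while it does. Size stays polynomial and the index is unchanged, so the \pspace{} consequences survive, though the exact state count in the statement changes.
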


\begin{proof}
    Let us first show that $\NEout(v) = \bigcap_{i \in \players} \val_{-i}(v)$. Let $\pi = \outcomefrom{\bsigma}{v}$ be an NE outcome. Then, by definition, for all players~$i$ and all strategies $\tau_i \in \Gamma_i$, we have $\pi \notR_i \outcomefrom{\tau_i,\bsigma_{-i}}{v}$. Hence, in each game $\game_i$, this means that $\pi \in \val_{-i}(v)$ (with $\bsigma_{-i}$ the strategy of $-i$ witnessing the membership of $\pi$ to $\val_{-i}(v)$). It follows that $\pi \in \bigcap_{i \in \players} \val_{-i}(v)$. Conversely, let $\pi$ be a play in $\bigcap_{i \in \players} \val_{-i}(v)$. We construct a strategy profile $\bsigma = (\sigma_i)_{i \in \players}$ with outcome $\pi$ as follows. This profile is (partially) defined to produce the outcome $\pi$. Along histories that are not prefix of $\pi$, it is defined as follows. Let $hu$ be a history starting at $v$ such that $h$ is a prefix of $\pi$ but not $hu$. Suppose that the last vertex of $h$ belongs to $V_i$ (so, player~$i$ has deviated from $\pi$). As $\pi \in \val_{-i}(v)$, by definition of this value in $\game_i$, there exists $\tau_{-i} \in \Gamma_{-i}$ such that for all $\rho \in \plays_{\tau_{-i}}(v)$, we have $\pi \notR_i \rho$. That is, there exists $\tau_{-i} \in \Gamma_{-i}$ such that for all $\tau_i \in \Gamma_i$, $\pi \notR_i \outcomefrom{(\tau_{-i},\tau_i)}{v}$. We use $\tau_{-i}$ as a punishing strategy for player~$i$: we define $\bsigma_{-i \| hu}$ as equal to $\tau_{-i \| hu}$ (where $\tau_{-i\|hu}$ is ``split'' among the players~$j \in \players$, $j\neq i$, to get the strategy profile $\bsigma_{-i \| hu}$), and $\sigma_{i \| hu}$ as any strategy. In this way, by construction, the strategy profile $\bsigma$ is an NE with outcome $\pi$. 

    Let us now construct an \APW{} accepting $\NEout(v)$. By \cref{thm:values-automatic} (adapted to player~$2$), for each $i \in \players$, we can construct an \APW{} $\aut{B}_i = (S_i,s_0^i,\gamma_i,\beta_i)$ such that $\lang(\aut{B}_i) = \val_{-i}(v)$. Moreover, each $\aut{B}_i$ has $|V|^2 \cdot |\aut{A}_i|$ states and a priority function of index $|\alpha_i|+1$. Thus, we can construct the required \APW{} $\aut{B}$ such that $\lang(\aut{B}) = \bigcap_{i\in\players} \lang(\aut{B}_i)$, as the intersection of the \APWs{} $\aut{B}_i$~\cite{handbook-automata-kupferman18}.

    Suppose that $\lang(\aut{B})$ is non-empty. Then, $\aut{B}$ accepts a lasso $\pi$ of size exponential in the number of states and the priority function of $\aut{B}$~\cite{MiyanoHayashi1984,handbook-Automata-WilkeSchewe21}. As explained in the beginning of this proof, recall that we can construct an NE from the lasso $\pi$ and from strategies $\tau_{-i}$ witnessing the membership of $\pi$ to $\val_{-i}(v)$, for all $i \in \players$, and that the latter strategies can be assumed to be finite-memory by \cref{thm:threshold-problem}. It follows that the constructed NE is made of finite-memory strategies. 
\end{proof}

We derive several important corollaries of \cref{thm:apw-for-ne-outcome}. We settle the complexity class of the NE threshold problem, but also of a simpler variant: whether there exists an NE. In both cases, we close the complexity gap left open in~\cite{BruyereGrandmontRaskin25-mfcs} (Theorems~5 and~6 where a \twoExptime{}-membership was established).

\begin{theorem}[restate=thmnethreshold,name=]
\label{thm:NEthreshold}
    The NE threshold problem and the NE existence problem are \pspaceComplete{}.
\end{theorem}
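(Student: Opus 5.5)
For the upper bounds we rely on \cref{thm:apw-for-ne-outcome}. NE existence from $v$ amounts to $\lang(\aut{B}) \neq \varnothing$, where $\aut{B}$ is the \APW{} accepting $\NEout(v)$. This \APW{} has polynomially many states, namely $1+\sum_i |V|^2\cdot|\aut{A}_i|$, and polynomial index. Emptiness of an \APW{} is decidable in \pspace{}, exactly as in the proof sketch of \cref{thm:threshold-problem}. We build the Miyano--Hayashi-style equivalent NBW (after the usual APW-to-ABW translation, which is polynomial in states times index) on the fly, and test its non-emptiness with an \nl{} algorithm over its exponentially large state space.

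For the NE threshold problem we intersect $\aut{B}$ with the sets $\{\rho \mid \pi_i \R_i \rho\}$, one for each $i \in \players$. Each such set is accepted by a \DPW{} of size $|\aut{A}_i|\cdot|\pi_i|$: it runs $\aut{A}_i$ on the pair $(\pi_i,\rho)$ while tracking the position of the lasso $\pi_i$ in its finite state. Intersecting an \APW{} with polynomially many \DPWs{} (viewed as \APWs{}) gives an \APW{} of polynomial size and index. Its non-emptiness is then decided in \pspace{} as above.

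For \pspace{}-hardness it suffices to treat NE existence. The threshold version then follows by choosing thresholds that are trivially satisfied, for instance by making each $\R_i$ accept anything paired with a fixed dummy lasso, or by reducing existence to threshold through a small gadget. The natural route is a reduction from the \pspace{}-complete problem of the second item of \cref{thm:threshold-problem}, whether $\val_1(v)\neq\varnothing$ (or equivalently from \DPW{}/automatic-relation universality-type problems). Given a two-player zero-sum game $(\arena,\R)$, consider the two-player game with player~$1$ using $\R$ and player~$2$ using its reverse. By \cref{thm:apw-for-ne-outcome}, $\NEout(v) = \val_{-1}(v)\cap\val_{-2}(v)$, so this intersection needs to be matched with nonemptiness of a value set. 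A cleaner source is a one-player game: there $\NEout(v)$ is the set of plays $\pi$ with $\pi \notR_1 \rho$ for all $\rho\in\plays(v)$, i.e., the maximal elements. Deciding whether a one-player game with an $\omega$-automatic relation has a maximal play should be \pspace{}-hard by reduction from NBW universality, or from emptiness of intersections of DFAs encoded in the \DPW{}. This is the style used for the one-player lower bound of \cref{thm:optimal-strategy}, which I would reuse or adapt.

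The main obstacle is the lower bound: designing $\R$ as a \DPW{} of polynomial size whose set of non-dominated plays is empty exactly when a \pspace{}-hard instance, e.g., intersection non-emptiness of DFAs $\aut{D}_1,\dots,\aut{D}_n$, is negative. My first attempt would encode plays as words $w\#^\omega$ and let $x \R y$ hold when $y$ is accepted by more of the $\aut{D}_j$ than $x$, or under a cyclic improvement when neither is accepted by all. Then a maximal play exists iff some word is accepted by all DFAs. Checking comparison of acceptance counts synchronously requires care, since the \DPW{} reads $x$ and $y$ in parallel and must track all $n$ DFAs on both components. That costs an exponential product, so I would instead let the relation check a single index chosen by a prefix letter, making dominance existential over which DFA rejects.
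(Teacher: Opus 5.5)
Your membership arguments are the paper's. For existence you test non-emptiness of the polynomial-size \APW{} $\aut{B}$ from \cref{thm:apw-for-ne-outcome} in \pspace{}. For the threshold version you intersect $\aut{B}$ with the \DPWs{} of size $|\aut{A}_i|\cdot|\pi_i|$ accepting $\{\rho \mid \pi_i \R_i \rho\}$. You differ from the paper on the lower bound. The paper builds no reduction: it imports \pspace{}-hardness of both problems from \cite{BruyereGrandmontRaskin25-mfcs}, where, as the paper recalls in its remark on Pareto-optimal NEs, it already holds for one-player games. Your hardness part is therefore not needed, and as written it is only a plan. Its last idea does work, though, and can be finished as follows.

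Identify vertices with letters. In a one-player arena, from $v_0$ player~$1$ picks an index $j\in\{0,\dots,n\}$, then writes letters of $\Sigma$ and may move at any time to a sink $\#$.
\begin{itemize}
\item \emph{Relation.} Let $x \R y$ hold iff either $x$ is \emph{not} of the form $v_0\,0\,w\,\#^\omega$ with $w\in\Sigma^*$, or $x = v_0\,0\,w\,\#^\omega$, $y$ begins with $v_0\,j$ for some $j\geq 1$, and $w\notin\lang(\aut{D}_j)$.
\item \emph{Automaton.} A \DPW{} of size $O(\sum_j|\aut{D}_j|)$ reads $j$ on $y$, runs $\aut{D}_j$ on $x$, and checks whether $x$ is valid.
\item \emph{NE outcomes.} In a one-player game, $\NEout(v_0)$ is the set of plays $x$ with $x\notR\rho$ for all $\rho\in\plays(v_0)$. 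Here it therefore equals $\{v_0\,0\,w\,\#^\omega \mid w\in\bigcap_j\lang(\aut{D}_j)\}$, which yields hardness from DFA intersection non-emptiness.
\end{itemize}
Your sketch leaves open the first disjunct. Every non-encoding play must be dominated: this covers plays starting with an index $j\geq 1$ and plays never reaching $\#$. Otherwise such plays become spurious NE outcomes.

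The same disjunct gives the threshold version for free. Take the invalid lasso $v_0\,1\,\#^\omega$ as threshold; it satisfies $\pi\R\rho$ for every $\rho$. Your generic ``dummy lasso'' suggestion, adding every pair $(\pi_i,y)$ to $\R_i$, is not safe in general. It forces $\pi_i\R_i\pi_i$ and so deletes $\pi_i$ from the NE outcomes. It is only harmless when $\pi_i\notin\NEout(v)$ beforehand, which the invalid lasso above satisfies.
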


\begin{proof}
    For both problems, the \pspaceHard{}ness is proved in~\cite{BruyereGrandmontRaskin25-mfcs}. Let us focus on the \pspace{}-membership. We begin with the NE existence problem. Keeping the notations of \cref{thm:apw-for-ne-outcome}, solving this problem is equivalent to checking whether $\lang(\aut{B}) \neq \varnothing$ for $\aut{B}$ an \APW{} of polynomial size, whose non-emptiness can be checked in \pspace{}, as done in the proof of \cref{thm:threshold-problem}.
    Let us now solve the NE threshold problem: given a lasso $\pi_i \in \plays(v)$ for all players~$i$, does there exist an NE outcome $\rho$ from $v$ such that $\pi_i \R_i \rho$, for all $i$. With the product of $\aut{A}_i$ with $\pi_i$, we get a \DPW{} $\aut{A}_i'$ accepting the language $L_i = \{\rho \in V^\omega \mid \pi_i \R_i \rho\}$ of polynomial size $|\aut{A}_i|\cdot |\pi_i|$. Then, there exists a solution to the threshold problem if and only if $L = \NEout(v) \cap (\bigcap_{i \in \players} L_i) \neq \varnothing$. As done in the proof of \cref{thm:apw-for-ne-outcome}, from $\aut{B}$ and each $\aut{A}_i'$, we finally construct an \APW{} of polynomial size that accepts $L$. Checking the non-emptiness of $L$ is therefore in \pspace{}.
\end{proof}

Let us mention that two verification problems related to the NE existence problem were solved in~\cite{BruyereGrandmontRaskin25-mfcs}. First, deciding whether a strategy profile $\bsigma = (\sigma_i)_{i\in \players}$, where each strategy $\sigma_i$ is defined by a Mealy machine, is an NE, is a \pspaceComplete{} problem~\cite[Theorem~3]{BruyereGrandmontRaskin25-mfcs}. Second, deciding whether a lasso is an NE outcome is a \parityComplete{} problem~\cite[Theorem~4]{BruyereGrandmontRaskin25-mfcs}.

Using \cref{thm:apw-for-ne-outcome}, we can handle another variant of the NE threshold problem, as studied for classical settings in~\cite{AlmagorKupfermanPerelli2018,Ummels08}. In this variant, we ask whether there exists an NE outcome that satisfies an $\omega$-regular condition or an LTL formula. In both cases, we get \pspace{}-membership.

\begin{corollary}
\label{thm:LTLconstraint}
    Let $\game$ be a game, $v \in V$ be a vertex, and $C\subseteq V^\omega$ be a language (called constraint), given either as an \APW{} or an LTL formula. Deciding whether there exists an NE from $v$ whose outcome belongs to $C$ is \pspaceComplete{}.
\end{corollary}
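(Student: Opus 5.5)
We handle membership and hardness separately.

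\textbf{Membership.} By \cref{thm:apw-for-ne-outcome}, $\NEout(v)$ is accepted by an \APW{} $\aut{B}$ of polynomial size, with polynomially many states and index $\max_i(|\alpha_i|+1)$. The question is whether $\NEout(v) \cap C \neq \varnothing$.

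If $C$ is given as an \APW{} $\aut{C}$, we build the intersection \APW{} of $\aut{B}$ and $\aut{C}$. It has a fresh initial state whose transition on each letter is the conjunction of the initial transitions of $\aut{B}$ and $\aut{C}$, and its priorities are the union of the two priority functions. This automaton has size $|\aut{B}|+|\aut{C}|+1$ and index at most the maximum of the two indices, so it is still polynomial. Its non-emptiness is decided in \pspace{} exactly as in the proof of \cref{thm:threshold-problem}: we translate it on the fly into an equivalent \NBW{} of exponential size~\cite{DaxKlaedtke08,Leucker-Sanchez-2010} and run an \nl{} emptiness check on that \NBW{}.

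If $C$ is given as an LTL formula $\varphi$ over $V$ (or over propositions labelling vertices), we use the classical translation of LTL into an alternating (even very weak, hence Büchi, hence parity) automaton of size linear in $|\varphi|$~\cite{Vardi95}. This reduces the LTL case to the \APW{} case.

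The only subtlety is bookkeeping: intersecting alternating parity automata with possibly different indices must keep polynomial size and index. The union of the priority functions handles this directly, since acceptance on each branch only involves priorities from one component.

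\textbf{Hardness.} We reduce from the NE existence problem, which is \pspaceHard{} by \cref{thm:NEthreshold}, i.e.~\cite{BruyereGrandmontRaskin25-mfcs}. We take the trivial constraint $C=V^\omega$. As an \APW{}, it is a one-state automaton with a true self-loop and priority $0$. As an LTL formula, it is $\mathsf{true}$. Then there exists an NE from $v$ with outcome in $C$ if and only if an NE from $v$ exists, which gives \pspace{}-hardness for both representations.

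Alternatively, LTL satisfiability alone already yields \pspace{}-hardness. On the one-player arena whose vertices are all valuations of the propositions, with a trivial preference relation ($\R_1=\varnothing$), every play is an NE outcome, so the question reduces to whether $\varphi$ is satisfiable.
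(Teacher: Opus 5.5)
Your proof is correct and is essentially the paper's: membership by intersecting the \APW{} for $\NEout(v)$ with an \APW{} for $C$ (an LTL constraint is first translated into a linear-size alternating B\"uchi automaton), with non-emptiness checked in \pspace{}; hardness follows from NE existence with $C = V^\omega$. Only your optional alternative hardness argument via LTL satisfiability needs care: an arena whose vertices are all valuations of the propositions is exponential in their number, so it is a polynomial reduction only when the atomic propositions are the vertices themselves (LTL satisfiability remains \pspaceHard{} under such a one-letter-per-position encoding), and you would also need a fresh initial vertex $v$ with the formula shifted by one step.
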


\begin{proof}
    We have to decide whether $\NEout(v)\cap C\neq \varnothing$. We can assume w.l.o.g.\ that the constraint $C$ is given as an \APW{}, because any LTL formula can be transformed into an \ABW{} of linear size~\cite{Vardi95} (seen as an \APW{}). Therefore, as in \cref{thm:NEthreshold}, the \pspace{} algorithm amounts to running a non-emptiness check on an \APW{}. 
    Since the NE existence problem is \pspaceComplete{} (with constraint $C = V^\omega$) by \cref{thm:NEthreshold}, we immediately get the \pspace{}-hardness.
\end{proof}

\begin{remark}
When there exist several NEs, it is interesting to focus on the NEs with the ``most preferred'' outcomes, that is, on \emph{Pareto-optimal} NE outcomes. These notions are defined as follows:
\[
\PO(\NEout(v)) = \left\{\pi \in \NEout(v) \mid \forall \pi' \in \NEout(v),\; \forall i \in \players,\; \pi \R_i \pi' \Rightarrow \pi' \R_i \pi\right\}.
\]

By using again the \APW{} $\aut{B}$ accepting $\NEout(v)$ (\cref{thm:apw-for-ne-outcome}), we can construct an automaton for $\PO(\NEout(v))$ and decide the existence of a Pareto-optimal NE in \expspace{}. The arguments are as follows.
    We first consider the language 
    \[L_1 = 
    \left\{(\pi,\pi') \in V^\omega \times V^\omega \mid \pi' \not\in \NEout(v) \lor (\forall i \in \players,\; \pi \R_i \pi' \Rightarrow \pi' \R_i \pi\right)\}.
    \]
    From $\aut{B}$ and the \DPWs{} $\aut{A}_i$ (seen as \APWs{}) accepting the relations $\R_i$, $i \in \players$, as \APWs{} are closed under Boolean operations, we can construct an \APW{} $\aut{C}_1$ accepting $L_1$. We then transform $\aut{C}_1$ into an \NBW{} of exponential size accepting the complementary language $V^\omega \setminus L_1$~\cite{DaxKlaedtke08}, and then take its projection on the $\pi$-component. The resulting \NBW{} $\aut{C}_2$ accepts the language
    \[L_2 = 
    \{\pi \in V^\omega \mid \exists \pi' \in \NEout(v) \land ( \exists i \in \players,\; \pi \R_i \pi' \land \pi' \notR_i \pi)\}.
    \]
    Notice that $\PO(\NEout(v))$ is equal to $\NEout(v) \cap (V^\omega \setminus L_2)$. Therefore, from the \APW{} $\aut{B}$ accepting $\NEout(v)$ and the \NBW{} $\aut{C}_2$ seen as an \APW{}, we can construct an \APW{} $\aut{C}$ accepting $\PO(\NEout(v))$. This automaton is of exponential size, because $\aut{C}_2$ is. Finally, we perform a non-emptiness check on $\aut{C}$ in \pspace{} (in the size of $\aut{C}$), thus in \expspace{}.
    
    Note that a \pspace{}-hardness directly follows by the \pspace{}-hardness of the NE existence problem for one-player games~\cite{BruyereGrandmontRaskin25-mfcs}. Indeed, for such games, it is easy to see that $\NEout(v) \neq \varnothing$ if and only if $\PO(\NEout(v)) \neq \varnothing$.

    Among all NEs, we could also be interested by those with the most preferred outcomes limited to some particular players. The definition of $\PO(\NEout(v))$ can be easily adapted to a strict subset of players (instead of the whole set $\players$). The above reasoning still holds, even for a restriction to one player.
    \lipicsEnd
\end{remark}

\section{Rational Synthesis}
\label{section:rational-synthesis}

Within the context of rational synthesis~\cite{Bruyere21,ConduracheFGR16,FismanKL10,KupfermanPV16}, we consider a game $\game$ with $\players = \{0,1, \dots, t\}$ such that player~$0$ is a specific player, often called \emph{system} or \emph{leader}, and the other players~$1,\dots,t$ compose the \emph{environment} and are called \emph{followers}. Player~$0$ announces their strategy $\sigma_0$ at the beginning of the game and is not allowed to change it according to the behavior of the other players. The response of those players to $\sigma_0$ is assumed to be \emph{rational}, where the rationality can be described as an NE outcome in the following way. Given an initial vertex $v$ and a strategy $\sigma_0$ announced by player~$0$, a strategy profile $\bsigma = (\sigma_0,\bsigma_{-0})$ is called a \emph{$0$-fixed Nash equilibrium} from $v$ if for every player~$i\in\players\ssetminus\{0\}$ and every strategy $\tau_i \in \Gamma_i$, it holds that $\outcomefrom{\bsigma}{v} \notR_i \outcomefrom{\tau_i,\bsigma_{-i}}{v}$.
We also say that $\bsigma$ is a \emph{$\sigma_0$-fixed NE} to emphasize the strategy $\sigma_0$ of player~$0$. The strategy profile $\bsigma_{-0}$ is the \emph{NE (rational) response} of players~$1, \ldots, t$ to the strategy $\sigma_0$ announced by player~$0$.

We consider the problem of \emph{NE rational synthesis}. Given a lasso $\pi$, the goal is to synthesize a strategy $\sigma_0$ for player~$0$ that guarantees that $\pi \R_0 {\outcomefrom{\bsigma}{v}}$ from $v$ whatever the NE response $\bsigma_{-0}$ of the environment is. We also consider the simpler problem where the environment \emph{cooperates} with the leader by proposing \emph{some} NE response $\bsigma_{-0}$ that guarantees that $\pi \R_0 {\outcomefrom{\bsigma}{v}}$. Both problems can be seen as a kind of threshold problem, where the threshold constraint $\pi$ must be enforced by player~$0$ against an environment that can be cooperative or not. 

\begin{problem}[NE rational synthesis]
\label{problem:synthesis}
    Given a game $\game = (\arena,(\R_i)_{i \in \players})$, an initial vertex $v$, and a lasso $\pi \in \plays(v)$,
    \begin{itemize}
        \item the \emph{cooperative NE rational synthesis (CRS)} problem is to decide whether there exists $\sigma_0 \in \Gamma_0$ and a $\sigma_0$-fixed NE $\bsigma$ from $v$ such that $\pi \R_0 {\outcomefrom{\bsigma}{v}}$;
        \item the \emph{non-cooperative NE rational synthesis (NCRS)} problem is to decide whether there exists $\sigma_0 \in \Gamma_0$ such that for all $\sigma_0$-fixed NEs $\bsigma$ from $v$, it holds that $\pi \R_0 {\outcomefrom{\bsigma}{v}}$.
    \end{itemize}
\end{problem} 

Our main results are that the CRS problem has the same complexity class as the NE threshold problem, while it becomes undecidable in the non-cooperative case, above the first layer of the analytical hierarchy $\Sigma_1^1$~\cite{Hinman17}.

\begin{theorem}[restate=thmrs,name=]
\label{thm:crs-pspace-ncrs-undec}
    \begin{itemize}
        \item The cooperative NE rational synthesis problem is \pspaceComplete{}.
        \item The non-cooperative NE rational synthesis problem is undecidable ($\Sigma_1^1$-hard), already for two players.
    \end{itemize} 
\end{theorem}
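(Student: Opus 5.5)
For the cooperative part, I would characterise the set of outcomes of $\sigma_0$-fixed NEs, taken over all $\sigma_0$, in the same way as \cref{thm:apw-for-ne-outcome}. Concretely, I would show that a play $\rho\in\plays(v)$ is the outcome of some $\sigma_0$-fixed NE (for some $\sigma_0$) if and only if $\rho\in\bigcap_{i=1}^{t}\val_{-i}(v)$. Here the coalition $-i$ now contains player~$0$, since player~$0$ is free to choose $\sigma_0$ and may therefore take part in punishing deviators.

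The direction ($\Rightarrow$) is immediate: $(\sigma_0,\bsigma_{-0,-i})$ witnesses $\rho\in\val_{-i}(v)$. For ($\Leftarrow$), I would reuse the construction in the proof of \cref{thm:apw-for-ne-outcome}. All players follow $\rho$; after a deviation of some follower $i$, all other players, including player~$0$, play the punishing strategy $\tau_{-i}$. This defines $\sigma_0$ together with a $\sigma_0$-fixed NE. Deviations by player~$0$ never occur along $\rho$, and for histories after such a deviation the strategies can be defined arbitrarily, since player~$0$ does not deviate from the strategy it announces.

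With this characterisation, the CRS problem amounts to the non-emptiness of $\bigcap_{i\geq1}\val_{-i}(v)\cap\{\rho\mid\pi\R_0\rho\}$. This set is recognised by a polynomial-size \APW{}, obtained exactly as in \cref{thm:NEthreshold} from the $\aut{B}_i$ and the product of $\aut{A}_0$ with $\pi$. Its non-emptiness can therefore be decided in \pspace{}. For hardness, I would reduce from NE existence (\cref{thm:NEthreshold}): give player~$0$ no vertices and a trivial relation $\R_0=V^\omega\times V^\omega$, so that $\sigma_0$-fixed NEs are exactly the NEs of the original game.

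For the non-cooperative part, I would prove $\Sigma_1^1$-hardness by reducing from a $\Sigma_1^1$-complete problem, and I would try the recurring-state problem for nondeterministic two-counter machines. Only two players are allowed, the leader $0$ and one follower $1$. The leader's strategy $\sigma_0$, an infinite object, encodes a candidate infinite run of the machine, for instance as a sequence of configurations written along the play. The follower acts as a verifier: the follower can branch off in order to challenge a pair of consecutive configurations. I would design $\R_1$ as an $\omega$-automatic relation comparing the main play with the deviation play, synchronously read with an offset. It should make a deviation profitable exactly when the challenged step is consistent, because a synchronous automaton can compare counter values between two plays written in unary, which a single $\omega$-regular condition cannot do. Consequently, NE responses are forced into the branch that exhibits an error whenever one exists. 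Finally, I would choose $\R_0$ and $\pi$ so that $\pi\R_0\rho$ holds for every NE outcome $\rho$ if and only if every NE avoids the error outcomes and the recurring state is visited infinitely often.

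I expect the main obstacle to be the careful gadget design. The relation must be synchronous, yet it must compare two consecutive configurations encoded in unary that lie on different plays. In addition, the NE condition has to be shown to capture exactly "some step is faulty", in both directions. I would first attempt this with a fixed-length shift, having the deviation play re-emit the next configuration aligned with the previous one.
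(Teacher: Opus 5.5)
Your cooperative part is correct and coincides with the paper's proof. Outcomes of $0$-fixed NEs (over all $\sigma_0$) are exactly $\bigcap_{i\neq 0}\val_{-i}(v)$, with player~$0$ inside each punishing coalition. Intersecting with $\{\rho\mid \pi\R_0\rho\}$ gives a polynomial \APW{} whose non-emptiness is decidable in \pspace{}. Hardness comes from adding a vertex-less player~$0$ with $\R_0=V^\omega\times V^\omega$.

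The non-cooperative part has a genuine gap. The gadget is the entire content of the undecidability proof, and the mechanism you sketch cannot certify a step: one synchronous comparison between the main play and a challenge branch, with a fixed shift. Everything on the challenge branch after it leaves the main play is also written by player~$0$. Suppose the branch re-emits a configuration $c'$ aligned with $c_k$. The \DPW{} for $\R_1$ can then check that $c'$ is a successor of $c_k$. It cannot check that $c'$ equals the configuration $c_{k+1}$ actually written on the main play, because in the synchronous reading $c'$ and $c_{k+1}$ are $|c_k|+1$ positions apart, which is unbounded, and unary counter values cannot be buffered. In general, one pair of plays only relates each segment of the main play to the segment of the branch aligned with it, never two distant segments of the main play to each other. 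Hence player~$0$ can make every challenge look consistent while writing an arbitrary main play, and your error branches never become NE outcomes. Note also that NCRS holds vacuously for any $\sigma_0$ admitting no $\sigma_0$-fixed NE, which your design would have to exclude.

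The missing idea is to propagate information along infinitely many branches through a \emph{chain} of comparisons. Each link compares two sibling branches whose suffixes differ by a \emph{bounded} word.

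The paper sets $\R_0=\varnothing$, so NCRS becomes ``some $\sigma_0$ admits no $\sigma_0$-fixed NE'', i.e., every play consistent with $\sigma_0$ has a profitable deviation. $\R_1$ gives each play exactly one candidate deviation: middle $\to$ $1$-left, $(k{+}1)$-left $\to$ $k$-left, $1$-left $\to$ $1$-right, and $k$-right $\to$ $(k{+}1)$-right. Each link encodes a bounded-offset equality such as $u_k=\varphi_L(a_{k+1})u_{k+1}$.

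Absence of NEs forces every link to hold. The branch suffixes therefore become the $\varphi_L$- and $\varphi_R$-images of the middle word, and a single aligned comparison then checks $\varphi_L(w)=\varphi_R(w)$. The source problem is $\omega$-rPCP, with the middle-branch link enforcing recurrence of $\bar a$.

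To keep a counter-machine source, you would need such a chain, which essentially re-derives the PCP encoding. A recurring-tiling reduction with row $k$ written on branch $k$ would also fit this scheme directly, since adjacent rows are compared at bounded offset.
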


\begin{proof}[Proof (Sketch)]
    We here prove the non-cooperative case; the cooperative case is similar to \cref{thm:NEthreshold} and detailed in \cref{app:rational-synthesis}. The proof is inspired by an undecidability result for tree automata with subtree equality constraints~\cite{tata}, and relies on a reduction from the \emph{infinite post correspondence problem} ($\omega$-PCP): given two alphabets $A, B$ and morphisms $\varphi_L,\varphi_R: A \to B^*$, the problem is to decide whether there exists an infinite word $u \in A^\omega$ such that $\varphi_L(u) = \varphi_R(u)$. We construct a two-player game $\game = (\mathsf{Arn}, (\R_0,\R_1))$ and a lasso $\pi$, such that $\omega$-PCP has a solution if and only if the NCRS problem with threshold $\pi$ has a solution. This reduction gives undecidability; we further show in \cref{app:rational-synthesis} that a slight modification of the proof yields $\Sigma_1^1$-hardness.

\begin{figure}[t]
    \centering
    \begin{minipage}[c]{0.43\textwidth}
        \centering
        \begin{tikzpicture}[x=0.75pt,y=0.75pt,yscale=-1,scale=.8,every node/.style={scale=.8}]
            \draw [fill={rgb,255:red,155;green,155;blue,155},fill opacity=0.1,dash pattern={on 4.5pt off 4.5pt},line width=0.75] (353.45,118.03) .. controls (353.45,112.15) and (358.22,107.39) .. (364.09,107.39) -- (396.01,107.39) .. controls (401.88,107.39) and (406.65,112.15) .. (406.65,118.03) -- (406.65,159.8) .. controls (406.65,165.68) and (401.88,170.44) .. (396.01,170.44) -- (364.09,170.44) .. controls (358.22,170.44) and (353.45,165.68) .. (353.45,159.8) -- cycle;
            \draw (148.82,139.17) .. controls (148.82,133.5) and (153.42,128.9) .. (159.09,128.9) .. controls (164.76,128.9) and (169.36,133.5) .. (169.36,139.17) .. controls (169.36,144.85) and (164.76,149.45) .. (159.09,149.45) .. controls (153.42,149.45) and (148.82,144.85) .. (148.82,139.17) -- cycle;
            \draw (209.6,130.57) .. controls (209.6,130.57) and (209.6,130.57) .. (209.6,130.57) -- (226.5,130.57) .. controls (226.5,130.57) and (226.5,130.57) .. (226.5,130.57) -- (226.5,148.06) .. controls (226.5,148.06) and (226.5,148.06) .. (226.5,148.06) -- (209.6,148.06) .. controls (209.6,148.06) and (209.6,148.06) .. (209.6,148.06) -- cycle;
            \draw (164.56,130.37) -- (183.71,105.9);
            \draw [shift={(185.56,103.53)}, rotate = 128.04,fill={rgb,255:red,0;green,0;blue,0},line width=0.08,draw opacity=0] (5.36,-2.57) -- (0,0) -- (5.36,2.57) -- (3.56,0) -- cycle;
            \draw (164.56,147.97) -- (184.08,173.48);
            \draw [shift={(185.9,175.86)}, rotate = 232.58,fill={rgb,255:red,0;green,0;blue,0},line width=0.08,draw opacity=0] (5.36,-2.57) -- (0,0) -- (5.36,2.57) -- (3.56,0) -- cycle;
            \draw (168.89,136.76) .. controls (180.25,131.36) and (196.78,132.24) .. (206.87,135.8);
            \draw [shift={(209.5,136.86)}, rotate = 204.62,fill={rgb,255:red,0;green,0;blue,0},line width=0.08,draw opacity=0] (5.36,-2.57) -- (0,0) -- (5.36,2.57) -- (3.56,0) -- cycle; 
            \draw (209.3,143.66) .. controls (201.25,148.42) and (181.31,148.49) .. (170.93,144.79);
            \draw [shift={(168.3,143.66)}, rotate = 27.55,fill={rgb,255:red,0;green,0;blue,0},line width=0.08,draw opacity=0] (5.36,-2.57) -- (0,0) -- (5.36,2.57) -- (3.56,0) -- cycle; 
            \draw (226.96,143.97) -- (251.67,157.43);
            \draw [shift={(254.3,158.86)}, rotate = 208.57,fill={rgb,255:red,0;green,0;blue,0},line width=0.08,draw opacity=0] (5.36,-2.57) -- (0,0) -- (5.36,2.57) -- (3.56,0) -- cycle;
            \draw (226.56,135.17) -- (250,120.89);
            \draw [shift={(252.56,119.33)}, rotate = 148.65,fill={rgb,255:red,0;green,0;blue,0},line width=0.08,draw opacity=0] (5.36,-2.57) -- (0,0) -- (5.36,2.57) -- (3.56,0) -- cycle;
            \draw (252.56,117.33) .. controls (252.56,111.66) and (257.16,107.06) .. (262.83,107.06) .. controls (268.51,107.06) and (273.11,111.66) .. (273.11,117.33) .. controls (273.11,123.01) and (268.51,127.61) .. (262.83,127.61) .. controls (257.16,127.61) and (252.56,123.01) .. (252.56,117.33) -- cycle;
            \draw (254.3,160.86) .. controls (254.3,155.19) and (258.9,150.59) .. (264.57,150.59) .. controls (270.25,150.59) and (274.85,155.19) .. (274.85,160.86) .. controls (274.85,166.53) and (270.25,171.13) .. (264.57,171.13) .. controls (258.9,171.13) and (254.3,166.53) .. (254.3,160.86) -- cycle;
            \draw (135.82,139.33) -- (145.82,139.21);
            \draw [shift={(148.82,139.17)}, rotate = 179.29,fill={rgb,255:red,0;green,0;blue,0},line width=0.08,draw opacity=0] (5.36,-2.57) -- (0,0) -- (5.36,2.57) -- (3.56,0) -- cycle;
            \draw (273.11,117.33) -- (300.68,117.27);
            \draw [shift={(303.68,117.26)}, rotate = 179.86,fill={rgb,255:red,0;green,0;blue,0},line width=0.08,draw opacity=0] (5.36,-2.57) -- (0,0) -- (5.36,2.57) -- (3.56,0) -- cycle;
            \draw (320.48,117.26) -- (349.05,117.19);
            \draw [shift={(352.05,117.19)}, rotate = 179.87,fill={rgb,255:red,0;green,0;blue,0},line width=0.08,draw opacity=0] (5.36,-2.57) -- (0,0) -- (5.36,2.57) -- (3.56,0) -- cycle;
            \draw (275.11,161.33) -- (302.68,161.27);
            \draw [shift={(305.68,161.26)}, rotate = 179.86,fill={rgb,255:red,0;green,0;blue,0},line width=0.08,draw opacity=0] (5.36,-2.57) -- (0,0) -- (5.36,2.57) -- (3.56,0) -- cycle;
            \draw (320.48,161.26) -- (350.05,161.19);
            \draw [shift={(353.05,161.19)}, rotate = 179.87,fill={rgb,255:red,0;green,0;blue,0},line width=0.08,draw opacity=0] (5.36,-2.57) -- (0,0) -- (5.36,2.57) -- (3.56,0) -- cycle;
    
            \draw (373,134) node [anchor=north west,inner sep=0.75pt,align=left] {$B$};
            \draw (213,135.93) node [anchor=north west,inner sep=0.75pt,align=left] {$a$};
            \draw (253.8,110.37) node [anchor=north west,inner sep=0.75pt,align=left] {$L_{a}$};
            \draw (254.6,154.17) node [anchor=north west,inner sep=0.75pt,align=left] {$R_{a}$};
            \draw (187.13,101.33) node [anchor=north west,inner sep=0.75pt,font=\scriptsize,align=left] {$...$};
            \draw (188.13,176.53) node [anchor=north west,inner sep=0.75pt,font=\scriptsize,align=left] {$...$};
            \draw (305.73,115.33) node [anchor=north west,inner sep=0.75pt,font=\scriptsize,align=left] {$...$};
            \draw (278.13,97.33) node [anchor=north west,inner sep=0.75pt,font=\small,align=left] {Read $\varphi_{L}(a)$};
            \draw (306.73,160.33) node [anchor=north west,inner sep=0.75pt,font=\scriptsize,align=left] {$...$};
            \draw (280.13,141.33) node [anchor=north west,inner sep=0.75pt,font=\small,align=left] {Read $\varphi_{R}(a)$};
            \draw (151.75,132) node [anchor=north west,inner sep=0.75pt,align=left] {$\#$};
        \end{tikzpicture}
        \caption{The arena $\arena$ used in the undecidability result of \cref{thm:crs-pspace-ncrs-undec}.}
        \label{fig:arena-undecidability-ncrs}
    \end{minipage}
    \hfill
    \begin{minipage}[c]{0.55\textwidth}
        \centering
        \begin{tikzpicture}[x=0.75pt,y=0.75pt,yscale=-1,scale=.8,every node/.style={scale=.8}]
            \draw (301.96,30.1) .. controls (307.64,30.1) and (312.23,34.7) .. (312.23,40.37) .. controls (312.23,46.05) and (307.63,50.65) .. (301.96,50.64) .. controls (296.28,50.64) and (291.69,46.04) .. (291.69,40.37) .. controls (291.69,34.69) and (296.29,30.1) .. (301.96,30.1) -- cycle;
            \draw (310.55,61.89) .. controls (310.55,61.89) and (310.55,61.89) .. (310.55,61.89) -- (310.54,78.79) .. controls (310.54,78.79) and (310.54,78.79) .. (310.54,78.79) -- (293.05,78.78) .. controls (293.05,78.78) and (293.05,78.78) .. (293.05,78.78) -- (293.06,61.88) .. controls (293.06,61.88) and (293.06,61.88) .. (293.06,61.88) -- cycle;
            \draw (310.01,69.89) -- (374.02,92.89);
            \draw [shift={(376.84,93.9)}, rotate = 199.76,fill={rgb,255:red,0;green,0;blue,0},line width=0.08,draw opacity=0] (5.36,-2.57) -- (0,0) -- (5.36,2.57) -- (3.56,0) -- cycle;
            \draw (292.49,70.51) -- (226.22,91.39);
            \draw [shift={(223.36,92.29)}, rotate = 342.51,fill={rgb,255:red,0;green,0;blue,0},line width=0.08,draw opacity=0] (5.36,-2.57) -- (0,0) -- (5.36,2.57) -- (3.56,0) -- cycle;
            \draw (212.76,85.99) .. controls (219.2,85.99) and (224.42,91.22) .. (224.42,97.67) .. controls (224.42,104.11) and (219.19,109.34) .. (212.75,109.33) .. controls (206.3,109.33) and (201.08,104.1) .. (201.08,97.66) .. controls (201.08,91.21) and (206.31,85.99) .. (212.76,85.99) -- cycle;
            \draw (301.96,50.64) -- (302,58.22);
            \draw [shift={(302.02,61.22)}, rotate = 269.66,fill={rgb,255:red,0;green,0;blue,0},line width=0.08,draw opacity=0] (5.36,-2.57) -- (0,0) -- (5.36,2.57) -- (3.56,0) -- cycle;
            \draw (301.96,79.14) -- (302,88.72);
            \draw [shift={(302.02,91.72)}, rotate = 269.71,fill={rgb,255:red,0;green,0;blue,0},line width=0.08,draw opacity=0] (5.36,-2.57) -- (0,0) -- (5.36,2.57) -- (3.56,0) -- cycle;
            \draw (193,135.58) .. controls (193,130.93) and (196.77,127.16) .. (201.42,127.16) -- (223.29,127.16) .. controls (227.94,127.16) and (231.71,130.93) .. (231.71,135.58) -- (231.71,135.58) .. controls (231.71,140.23) and (227.94,144) .. (223.29,144) -- (201.42,144) .. controls (196.77,144) and (193,140.23) .. (193,135.58) -- cycle;
            \draw (212.35,108.83) -- (212.4,124.41);
            \draw [shift={(212.41,127.41)}, rotate = 269.81,fill={rgb,255:red,0;green,0;blue,0},line width=0.08,draw opacity=0] (5.36,-2.57) -- (0,0) -- (5.36,2.57) -- (3.56,0) -- cycle;
            \draw (211.85,143.83) -- (211.9,159.41);
            \draw [shift={(211.91,162.41)}, rotate = 269.81,fill={rgb,255:red,0;green,0;blue,0},line width=0.08,draw opacity=0] (5.36,-2.57) -- (0,0) -- (5.36,2.57) -- (3.56,0) -- cycle;
            \draw (387.35,111.83) -- (387.4,127.41);
            \draw [shift={(387.41,130.41)}, rotate = 269.81,fill={rgb,255:red,0;green,0;blue,0},line width=0.08,draw opacity=0] (5.36,-2.57) -- (0,0) -- (5.36,2.57) -- (3.56,0) -- cycle;
            \draw (386.85,146.83) -- (386.9,162.41);
            \draw [shift={(386.91,165.41)}, rotate = 269.81,fill={rgb,255:red,0;green,0;blue,0},line width=0.08,draw opacity=0] (5.36,-2.57) -- (0,0) -- (5.36,2.57) -- (3.56,0) -- cycle;
            \draw (301.96,92) .. controls (307.64,92) and (312.23,96.6) .. (312.23,102.27) .. controls (312.23,107.95) and (307.63,112.55) .. (301.96,112.54) .. controls (296.28,112.54) and (291.69,107.94) .. (291.69,102.27) .. controls (291.69,96.59) and (296.29,92) .. (301.96,92) -- cycle;
            \draw (310.55,125.79) .. controls (310.55,125.79) and (310.55,125.79) .. (310.55,125.79) -- (310.54,142.69) .. controls (310.54,142.69) and (310.54,142.69) .. (310.54,142.69) -- (293.05,142.68) .. controls (293.05,142.68) and (293.05,142.68) .. (293.05,142.68) -- (293.06,125.78) .. controls (293.06,125.78) and (293.06,125.78) .. (293.06,125.78) -- cycle;
            \draw (311.01,133.79) -- (337.14,146.49);
            \draw [shift={(339.84,147.8)}, rotate = 205.91,fill={rgb,255:red,0;green,0;blue,0},line width=0.08,draw opacity=0] (5.36,-2.57) -- (0,0) -- (5.36,2.57) -- (3.56,0) -- cycle;
            \draw (292.49,134.41) -- (269.01,146.79);
            \draw [shift={(266.36,148.19)}, rotate = 332.2,fill={rgb,255:red,0;green,0;blue,0},line width=0.08,draw opacity=0] (5.36,-2.57) -- (0,0) -- (5.36,2.57) -- (3.56,0) -- cycle;
            \draw (301.96,112.54) -- (302,122.12);
            \draw [shift={(302.02,125.12)}, rotate = 269.71,fill={rgb,255:red,0;green,0;blue,0},line width=0.08,draw opacity=0] (5.36,-2.57) -- (0,0) -- (5.36,2.57) -- (3.56,0) -- cycle;
            \draw (301.96,143.04) -- (302,152.62);
            \draw [shift={(302.02,155.62)}, rotate = 269.71,fill={rgb,255:red,0;green,0;blue,0},line width=0.08,draw opacity=0] (5.36,-2.57) -- (0,0) -- (5.36,2.57) -- (3.56,0) -- cycle;
            \draw (301.46,156) .. controls (307.14,156) and (311.73,160.6) .. (311.73,166.27) .. controls (311.73,171.95) and (307.13,176.55) .. (301.46,176.54) .. controls (295.78,176.54) and (291.19,171.94) .. (291.19,166.27) .. controls (291.19,160.59) and (295.79,156) .. (301.46,156) -- cycle;
            \draw (310.05,189.79) .. controls (310.05,189.79) and (310.05,189.79) .. (310.05,189.79) -- (310.04,206.69) .. controls (310.04,206.69) and (310.04,206.69) .. (310.04,206.69) -- (292.55,206.68) .. controls (292.55,206.68) and (292.55,206.68) .. (292.55,206.68) -- (292.56,189.78) .. controls (292.56,189.78) and (292.56,189.78) .. (292.56,189.78) -- cycle;
            \draw (301.46,176.54) -- (301.5,186.12);
            \draw [shift={(301.52,189.12)}, rotate = 269.71,fill={rgb,255:red,0;green,0;blue,0},line width=0.08,draw opacity=0] (5.36,-2.57) -- (0,0) -- (5.36,2.57) -- (3.56,0) -- cycle;
            \draw (301.46,207.04) -- (301.51,218.62);
            \draw [shift={(301.52,221.62)}, rotate = 269.75,fill={rgb,255:red,0;green,0;blue,0},line width=0.08,draw opacity=0] (5.36,-2.57) -- (0,0) -- (5.36,2.57) -- (3.56,0) -- cycle;
            \draw (310.01,202.79) -- (319.71,219.04);
            \draw [shift={(321.25,221.62)}, rotate = 239.15,fill={rgb,255:red,0;green,0;blue,0},line width=0.08,draw opacity=0] (5.36,-2.57) -- (0,0) -- (5.36,2.57) -- (3.56,0) -- cycle;
            \draw (292.75,202.62) -- (282.38,218.6);
            \draw [shift={(280.75,221.12)}, rotate = 302.97,fill={rgb,255:red,0;green,0;blue,0},line width=0.08,draw opacity=0] (5.36,-2.57) -- (0,0) -- (5.36,2.57) -- (3.56,0) -- cycle;
            \draw (387.26,87.99) .. controls (393.7,87.99) and (398.92,93.22) .. (398.92,99.67) .. controls (398.92,106.11) and (393.69,111.34) .. (387.25,111.33) .. controls (380.8,111.33) and (375.58,106.1) .. (375.58,99.66) .. controls (375.58,93.21) and (380.81,87.99) .. (387.26,87.99) -- cycle;
            \draw (368,138.08) .. controls (368,133.43) and (371.77,129.66) .. (376.42,129.66) -- (398.29,129.66) .. controls (402.94,129.66) and (406.71,133.43) .. (406.71,138.08) -- (406.71,138.08) .. controls (406.71,142.73) and (402.94,146.5) .. (398.29,146.5) -- (376.42,146.5) .. controls (371.77,146.5) and (368,142.73) .. (368,138.08) -- cycle;
            \draw (258.76,145.49) .. controls (265.2,145.49) and (270.42,150.72) .. (270.42,157.17) .. controls (270.42,163.61) and (265.19,168.84) .. (258.75,168.83) .. controls (252.3,168.83) and (247.08,163.6) .. (247.08,157.16) .. controls (247.08,150.71) and (252.31,145.49) .. (258.76,145.49) -- cycle;
            \draw (239,195.08) .. controls (239,190.43) and (242.77,186.66) .. (247.42,186.66) -- (269.29,186.66) .. controls (273.94,186.66) and (277.71,190.43) .. (277.71,195.08) -- (277.71,195.08) .. controls (277.71,199.73) and (273.94,203.5) .. (269.29,203.5) -- (247.42,203.5) .. controls (242.77,203.5) and (239,199.73) .. (239,195.08) -- cycle;
            \draw (258.35,168.33) -- (258.4,183.91);
            \draw [shift={(258.41,186.91)}, rotate = 269.81,fill={rgb,255:red,0;green,0;blue,0},line width=0.08,draw opacity=0] (5.36,-2.57) -- (0,0) -- (5.36,2.57) -- (3.56,0) -- cycle;
            \draw (257.85,203.33) -- (257.9,218.91);
            \draw [shift={(257.91,221.91)}, rotate = 269.81,fill={rgb,255:red,0;green,0;blue,0},line width=0.08,draw opacity=0] (5.36,-2.57) -- (0,0) -- (5.36,2.57) -- (3.56,0) -- cycle;
            \draw (347.85,168.33) -- (347.9,183.91);
            \draw [shift={(347.91,186.91)}, rotate = 269.81,fill={rgb,255:red,0;green,0;blue,0},line width=0.08,draw opacity=0] (5.36,-2.57) -- (0,0) -- (5.36,2.57) -- (3.56,0) -- cycle;
            \draw (347.35,203.33) -- (347.4,218.91);
            \draw [shift={(347.41,221.91)}, rotate = 269.81,fill={rgb,255:red,0;green,0;blue,0},line width=0.08,draw opacity=0] (5.36,-2.57) -- (0,0) -- (5.36,2.57) -- (3.56,0) -- cycle;
            \draw (348.76,144.49) .. controls (355.2,144.49) and (360.42,149.72) .. (360.42,156.17) .. controls (360.42,162.61) and (355.19,167.84) .. (348.75,167.83) .. controls (342.3,167.83) and (337.08,162.6) .. (337.08,156.16) .. controls (337.08,149.71) and (342.31,144.49) .. (348.76,144.49) -- cycle;
            \draw (328.5,194.58) .. controls (328.5,189.93) and (332.27,186.16) .. (336.92,186.16) -- (358.79,186.16) .. controls (363.44,186.16) and (367.21,189.93) .. (367.21,194.58) -- (367.21,194.58) .. controls (367.21,199.23) and (363.44,203) .. (358.79,203) -- (336.92,203) .. controls (332.27,203) and (328.5,199.23) .. (328.5,194.58) -- cycle;
            \draw [color={rgb,255:red,255;green,0;blue,0},draw opacity=0.5 ,line width=1.5,dash pattern={on 5.63pt off 4.5pt}]  (228.61,213.32) .. controls (179.07,208.82) and (145.68,159.62) .. (171.86,142.7);
            \draw [shift={(174.94,140.99)}, rotate = 154.65,fill={rgb,255:red,255;green,0;blue,0},fill opacity=0.5 ,line width=0.08,draw opacity=0] (8.75,-4.2) -- (0,0) -- (8.75,4.2) -- (5.81,0) -- cycle;
            \draw [color={rgb,255:red,255;green,0;blue,0},draw opacity=0.5 ,line width=1.5,dash pattern={on 5.63pt off 4.5pt}]  (420.28,146.66) .. controls (460.66,151.91) and (460.95,198.88) .. (394.37,207.62);
            \draw [shift={(391.28,207.99)}, rotate = 353.72,fill={rgb,255:red,255;green,0;blue,0},fill opacity=0.5 ,line width=0.08,draw opacity=0] (8.75,-4.2) -- (0,0) -- (8.75,4.2) -- (5.81,0) -- cycle;
            \draw [color={rgb,255:red,255;green,0;blue,0},draw opacity=0.5 ,line width=1.5,dash pattern={on 5.63pt off 4.5pt}]  (225.61,83.66) .. controls (255.51,47.91) and (341.5,47.65) .. (374.51,81.64);
            \draw [shift={(376.94,84.32)}, rotate = 229.79,fill={rgb,255:red,255;green,0;blue,0},fill opacity=0.5 ,line width=0.08,draw opacity=0] (8.75,-4.2) -- (0,0) -- (8.75,4.2) -- (5.81,0) -- cycle;
            \draw (294.67,33.43) node [anchor=north west,inner sep=0.75pt,align=left] {$\#$};
            \draw (201.97,89.7) node [anchor=north west,inner sep=0.75pt,align=left] {$L_{a_{1}}$};
            \draw (206.47,168.87) node [anchor=north west,inner sep=0.75pt,font=\scriptsize,align=left] {$...$};
            \draw (295.02,66.1) node [anchor=north west,inner sep=0.75pt,align=left] {$a_{1}$};
            \draw (374.77,92.33) node [anchor=north west,inner sep=0.75pt,align=left] {$R_{a_{1}}$};
            \draw (193.47,129.2) node [anchor=north west,inner sep=0.75pt,font=\footnotesize,align=left] {$\varphi_{L}(a_{1})$};
            \draw (186.97,178.7) node [anchor=north west,inner sep=0.75pt,font=\footnotesize,align=left] {$u_{1}\in B^\omega$};
            \draw (381.47,172.87) node [anchor=north west,inner sep=0.75pt,font=\scriptsize,align=left] {$...$};
            \draw (370.97,181.7) node [anchor=north west,inner sep=0.75pt,font=\footnotesize,align=left] {$v_{1}\in B^\omega$};
            \draw (251.47,224.77) node [anchor=north west,inner sep=0.75pt,font=\scriptsize,align=left] {$...$};
            \draw (293.72,130.9) node [anchor=north west,inner sep=0.75pt,align=left] {$a_{2}$};
            \draw (234.3,231.6) node [anchor=north west,inner sep=0.75pt,font=\footnotesize,align=left] {$u_{2}\in B^\omega$};
            \draw (326.63,230.6) node [anchor=north west,inner sep=0.75pt,font=\footnotesize,align=left] {$v_{2}\in B^\omega$};
            \draw (293.02,194.5) node [anchor=north west,inner sep=0.75pt,align=left] {$a_{3}$};
            \draw (368.47,132.2) node [anchor=north west,inner sep=0.75pt,font=\footnotesize,align=left] {$\varphi_{R}(a_{1})$};
            \draw (247.97,149.2) node [anchor=north west,inner sep=0.75pt,align=left] {$L_{a_2}$};
            \draw (239.47,188.7) node [anchor=north west,inner sep=0.75pt,font=\footnotesize,align=left] {$\varphi_{L}( a_{2})$};
            \draw (336.27,148.83) node [anchor=north west,inner sep=0.75pt,align=left] {$R_{a_{2}}$};
            \draw (340.97,224.87) node [anchor=north west,inner sep=0.75pt,font=\scriptsize,align=left] {$...$};
            \draw (328.97,188.7) node [anchor=north west,inner sep=0.75pt,font=\footnotesize,align=left] {$\varphi_{R}(a_{2})$};
            \draw (294.93,95.43) node [anchor=north west,inner sep=0.75pt,align=left] {$\#$};
            \draw (294.47,159.23) node [anchor=north west,inner sep=0.75pt,align=left] {$\#$};
        \end{tikzpicture}
        \caption{The unravelling of $\arena$ from~\cref{fig:arena-undecidability-ncrs} for a fixed $\sigma_0$.}
        \label{fig:arena-unravelling-undecidability-ncrs}
    \end{minipage}
\end{figure}

The two-player arena $\mathsf{Arn} = (V,E,\{0,1\},(V_0,V_1))$ is shown in \cref{fig:arena-undecidability-ncrs}. A play in this arena is constructed as follows: first, player~$0$ selects some arbitrary letter $a \in A$ starting from the initial vertex $v=\#$. Player~$1$ responds with one of the following three choices:
\begin{enumerate}
    \item move back to $\#$, forcing player~$0$ to choose another letter in $A$;
    \item choose $L_a$, which then forces a visit to a sequence of vertices corresponding to $\varphi_L(a)$ followed by a move into a $|B|$-clique, each vertex of which is an element of $B$ owned by player~$0$; or
    \item choose $R_a$, which visits vertices corresponding to $\varphi_R(a)$ before entering $B$.
\end{enumerate}
 
Any strategy of player~0 can be represented as a tree, depicted in Figure~\ref{fig:arena-unravelling-undecidability-ncrs}, and can be identified with a triple $(u, (u_i)_{i\geq 1}, (v_i)_{i\geq 1})$ where: $(i)$ $u=a_1a_2\dots\in A^\omega$ is the word formed by player~$0$ when player~$1$ plays $\#$ \emph{ad infinitum}, $(ii)$ for all $i\geq 1$, $u_i$ is the word formed by player~$0$ in the $B$-component of $\mathsf{Arn}$, after history $\#a_1\dots \#a_iL_{a_i}\varphi_L(a_i)$, and $(iii)$ for all $i\geq 1$, $v_i$ is the word formed by player~$0$ in the $B$-component, after history $\#a_1\dots \#a_iR_{a_i}\varphi_R(a_i)$.

The preference relation $\R_0$ is defined as the empty set $\varnothing$, so that whatever the strategy profile $(\sigma_0,\sigma_1)$ and play $\pi$, $\pi \R_0 {\outcomefrom{\bsigma}{\#}}$ does \emph{not} hold. Therefore, there exists a solution to the NCRS problem if and only if there exists a strategy $\sigma_0$ such that \emph{no} $\sigma_0$-fixed NE exists. The preference relation $\R_1$ for player~$1$ is defined in such a way that there exists a $\sigma_0$-fixed NE if and only if there is no solution to $\omega$-PCP. It essentially follows four rules (see the dotted arrows in Figure~\ref{fig:arena-unravelling-undecidability-ncrs}):
\begin{description}
    \item[$(i)$] the rightmost branch is preferred to the leftmost branch if $\varphi_L(a_1)u_1=\varphi_R(a_1)v_1$,
    \item[$(ii)$] the infinite branch visiting $L_{a_i}$ is preferred to the one visiting $L_{a_{i+1}}$ if $u_i = \varphi_L(a_{i+1})u_{i+1}$,
    \item[$(iii)$] the infinite branch visiting $R_{a_{i+1}}$ is preferred to the one visiting $R_{a_i}$ if $v_i = \varphi_R(a_{i+1})v_{i+1}$,
    \item[$(iv)$] the leftmost branch is always preferred to the middle branch.
\end{description}
In cases $(i)$-$(iii)$, if the condition does not hold, there is no preference. It is not difficult to show that $\R_1$ is $\omega$-automatic.

To conclude the sketch of proof, if there is a solution $a_1a_2\dots\in A^\omega$ to $\omega$-PCP, then player~0 can enforce the non-existence of a $\sigma_0$-fixed NE by choosing $a_1a_2\dots$ as middle branch, and by choosing for all $i\geq 1$, $u_i = \varphi_L(a_{i+1}a_{i+2}\dots)$ and $v_i=\varphi_R(a_{i+1}a_{i+2}\dots)$. Hence, $u_i = \varphi_L(a_{i+1})u_{i+1}$, $v_i=\varphi_R(a_{i+1})v_{i+1}$, $\varphi_L(a_1)u_1 = \varphi_L(a_1a_2\dots) = \varphi_R(a_1a_2\dots)=\varphi_R(a_1)v_1$. So, player~$1$ always has an incentive to deviate in that case. 

Conversely, if there is a strategy $\sigma_0$, identified as a triple $(u=a_1a_2\dots, (u_i)_{i\geq 1}, (v_i)_{i\geq 1})$, with no $\sigma_0$-fixed Nash equilibria, then player~$1$ always has an incentive to deviate. Let us show that there is a solution to $\omega$-PCP. By rule $(ii)$, we obtain that $u_1 = \varphi_L(a_2)u_2 = \varphi_L(a_2)\varphi_L(a_3)u_3$ and so on, ensuring that $u_1 = \varphi_L(a_2a_3\dots)$. Similarly, from rule $(iii)$ we obtain $v_1 = \varphi_R(a_2a_3\dots)$. Finally, from rule $(i)$, we get $\varphi_L(a_1a_2\dots) = \varphi_L(a_1)u_1 = \varphi_R(a_1)v_1 = \varphi_R(a_1a_2\dots)$. 
\end{proof}

We conclude with two verification problems: given a finite-memory strategy $\sigma_0$, verify whether it is a solution to the CRS (resp.\ NCRS) problem. For the NCRS problem, we recover decidability because it is simpler than deciding whether such a solution exists.

\begin{theorem}[restate=thmncrv,name=]
\label{thm:verification-rational-synthesis}
    Given a lasso $\pi$ and a strategy $\sigma_0 \in \Gamma_0$ defined by a Mealy machine, deciding whether $\sigma_0$ is a solution to the CRS (resp.\ NCRS) problem for $\pi$ is \pspaceComplete{}.
\end{theorem}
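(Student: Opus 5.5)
The plan is to fix the Mealy machine $\aut{M}$ for $\sigma_0$ and replace $\game$ by the product game $\game'$. In $\game'$, vertices are pairs $(u,m)$ with $u\in V$ and $m$ a state of $\aut{M}$. At a vertex $(u,m)$ with $u\in V_0$, the unique successor is the one dictated by $\aut{M}$. The remaining players $1,\dots,t$ keep their vertices. Preferences $\R_i$ are lifted by projecting onto the $V$-component: the \DPW{} $\aut{A}_i$ simply ignores the memory component. This keeps $\aut{A}_i$ of the same size, while the arena grows polynomially. The player~$0$ vertices then have one successor each, so player~$0$ is a dummy.

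Next I check that $\sigma_0$-fixed NEs from $v$ in $\game$ correspond exactly to NEs of players $1,\dots,t$ from $(v,m_0)$ in $\game'$. Under this correspondence outcomes are preserved up to projection. Since player~$0$ has no choices in $\game'$, the NE condition for player~$0$ is vacuous. Hence the set of $\sigma_0$-fixed NE outcomes is the projection of $\NEout'(v,m_0)=\bigcap_{i=1}^t \val_{-i}(v,m_0)$. By \cref{thm:apw-for-ne-outcome}, this set is accepted by an \APW{} $\aut{B}$ of polynomial size.

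The two upper bounds then follow.

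\textbf{CRS verification.} We must decide whether $\NEout'\cap L_\pi\neq\varnothing$, where $L_\pi=\{\rho\mid \pi\R_0\rho\}$ (lifted to $\game'$) is given by the \DPW{} $\aut{A}_0\times\pi$. This is an \APW{} non-emptiness check, hence in \pspace{}, exactly as in \cref{thm:NEthreshold}.

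\textbf{NCRS verification.} We must decide whether $\NEout'\subseteq L_\pi$, i.e., whether $\NEout'\cap \overline{L_\pi}=\varnothing$. The complement $\overline{L_\pi}$ is a \DPW{} obtained by shifting priorities by one. Emptiness of an \APW{} is in \pspace{}, so this case is also in \pspace{}. This is why verification is decidable even though synthesis is not: the quantification over $\sigma_0$ has disappeared.

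For \pspace{}-hardness of CRS verification, I reduce from NE existence, which is \pspace{}-hard by \cref{thm:NEthreshold}. Given a game with players $1,\dots,t$, add a fresh player~$0$ who owns no vertices, so $\sigma_0$ is trivial and given by a one-state Mealy machine. Take $\R_0=V^\omega\times V^\omega$ and any lasso $\pi$. Then CRS verification holds iff an NE exists.

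For NCRS verification, use the same construction with $\R_0=\varnothing$. Then $\sigma_0$ is a solution iff no NE exists, which gives hardness by closure of \pspace{} under complement. A detail to check here is that the hardness of NE existence holds in the form I need (e.g. the one-player games of~\cite{BruyereGrandmontRaskin25-mfcs}). It also needs checking that adding a vertex-less player~$0$ is harmless.

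The main obstacle is checking the product correspondence. In particular, deviations of the other players in $\game$ must be matched by deviations in $\game'$ with the memory state tracked correctly. Everything else reuses the \APW{} machinery already developed.
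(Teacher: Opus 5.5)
Your proposal is correct and follows essentially the same route as the paper: you take the product of $\game$ with the Mealy machine, decide \APW{} non-emptiness of $\NEout'\cap L_\pi$ (resp.\ emptiness of $\NEout'\cap\overline{L_\pi}$) for the upper bounds, and reduce from NE existence with a vertex-less player~$0$ and $\R_0=V^\omega\times V^\omega$ (resp.\ $\R_0=\varnothing$) for hardness. One small imprecision: under the paper's definition the quantifier over $\tau_0$ includes $\sigma_0$ itself, so player~$0$'s NE condition is $\outcomefrom{\bsigma}{v}\notR_0\outcomefrom{\bsigma}{v}$, which is not vacuous and fails for every profile when $\R_0=V^\omega\times V^\omega$; your argument is nonetheless sound because you intersect only over $i=1,\dots,t$, exactly as the definition of $\sigma_0$-fixed NE requires.
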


\begin{proof}
    Let $\pi$ be a lasso and $\aut{M}$ be a Mealy machine~\cite{lncs2500} defining a finite-memory strategy $\sigma_0 \in \Gamma_0$. We construct a game $\game'$ as the polynomial size product of $\game$ and $\aut{M}$. Hence, $\sigma_0$-fixed NEs in $\game$ correspond to NEs in~$\game'$.

    Let us first consider the cooperative case. We have to decide whether there exists an NE outcome $\rho$ in $\game'$ such that $\pi \R_0 \rho$. Thus, we get \pspace{}-membership as done for the NE threshold problem in \cref{thm:NEthreshold}. The \pspaceHard{}ness is obtained by reduction from the NE existence problem. The reduction is the same as in \pspaceHard{}ness of \cref{thm:crs-pspace-ncrs-undec} (we add a new player~$0$ who owns no vertices and has the relation $\mathord{\R_0} = V^\omega \times V^\omega$).

    Let us shift to the non-cooperative case. We recall that the \pspace{} class is closed under complementation. We repeat similar arguments, but with the complementary problem, that is, deciding whether there is an NE outcome $\rho$ in $\game'$ such that $\pi \notR_0 \rho$ (the hardness requires setting $\mathord{\R_0} = \varnothing$).
\end{proof}

\begin{remark}
    In \cref{thm:verification-rational-synthesis}, we limit our attention to strategies $\sigma_0$ defined by deterministic Mealy machines. In the context of rational verification, it is common to consider \emph{non-deterministic} Mealy machines, which serve as specifications defining a (possibly infinite) set of allowed strategies~\cite{BriceRB23-rational-verif,BruyereGrandmontRaskin24-concur,BruyereRaskinTamines22}; this problem is termed \emph{universal rational verification}. The concept of non-deterministic strategies was also studied under the name of permissive strategy, multi-strategy, or non-deterministic strategy, in the context of synthesis in two-player zero-sum games~\cite{BernetJW02,BouyerDMR09,PinchinatR05}.
    In our framework, the universal non-cooperative NE rational verification problem asks, given a lasso $\pi$ starting at $v$ and a non-deterministic Mealy machine $\aut{M}$, whether for all strategies $\sigma_0 \in \Gamma_0$ compatible with $\aut{M}$, for all $\sigma_0$-fixed NEs $\bsigma$ from $v$, it holds that $\pi \R_0 {\outcomefrom{\bsigma}{v}}$. This problem remains \pspaceComplete{}, with essentially the same proof as for the non-cooperative case of \cref{thm:verification-rational-synthesis}.
    However, the universal cooperative NE rational verification becomes undecidable. This is because its negation asks whether there exists a strategy $\sigma_0$ of $\aut{M}$ such that for all $\sigma_0$-NEs, it holds that $\pi \not\R_0 {\outcomefrom{\bsigma}{v}}$, which is precisely the NCRS problem (with a negated lasso constraint), proved to be undecidable in \cref{thm:crs-pspace-ncrs-undec}.
\end{remark}

\section{Conclusion}
\label{section:conclusion-future-work}

We have studied multiplayer games with $\omega$-automatic preference relations. In the two-player zero-sum setting, we have obtained complexity bounds for various \emph{threshold}- and \emph{optimality}-related problems (\cref{thm:threshold-problem,thm:optimal-strategy}). Then, we have investigated decision problems for NEs in the non-zero-sum multiplayer setting. Since NEs do not necessarily exist in this context, we first addressed the decision problem of their existence: we established a tight \pspace{} complexity bound for deciding whether there exists an NE, even if it is required to satisfy additional $\omega$-regular constraints (specified as threshold lassos, LTL formulas, or \APWs{}). Finally, we have analyzed the problem of rational synthesis. In the cooperative case, we derived a tight \pspace{} complexity bound, whereas in the non-cooperative case, we proved that the synthesis problem is undecidable. In contrast, the corresponding rational verification problems are \pspaceComplete{} in both cases.

In sequential games, NEs are known to suffer from the issue of non-credible threats~\cite{OsborneRubinstein1994,selten1965}. Perhaps the most natural future direction is to study the decision problems of this paper with more robust rationality notions, such as \emph{subgame perfect equilibria}~\cite{selten1965} or \emph{admissible strategies}~\cite{Berwanger07,BrenguierRS14} that are immune to non-credible threats. Let us mention that the notion of \emph{secure equilibria}, a class of NEs with lexicographically ordered objectives~\cite{BrihayeBPG13,ChatterjeeHJ06}, is already covered by this paper, as the latter can be expressed as $\omega$-automatic preferences. 

Although we have demonstrated that $\omega$-automatic relations---formalized as $\omega$-regular languages over the alphabet $V \times V$---are highly expressive and broadly applicable, they admit a natural extension to \emph{rational relations}, that is, subsets of $V^\omega \times V^\omega$ recognizable by finite-state transducers. An open problem is whether the decision problems studied in this work remain decidable in this more expressive framework; it is plausible that a substantial subset of them becomes undecidable. In the converse direction, it is equally crucial to identify and analyze natural subclasses of $\omega$-automatic preference relations for which non-cooperative rational synthesis regains decidability.

\bibliography{bibliography}

\newpage\appendix

\section{Proof of \texorpdfstring{\cref{thm:threshold-problem}}{Theorem~\ref{thm:threshold-problem}}}
\label{app:thresholdproblems}

\thmthresholdproblem*

\begin{proof}
    The proof consists of three parts, one for each problem studied.

    \subparagraph*{First problem}
    Let $\pi$ be a lasso in $\plays(v)$. Solving the threshold problem for the instance $\pi$ is equivalent to checking whether $\pi \in \val_1(v)$, that is, $\pi$ is accepted by the \APW{} $\aut{B}$ in \cref{thm:values-automatic}. It is classical that the \APW{} membership problem can be reduced to solving a (two-player zero-sum) parity game~\cite{handbook-Automata-WilkeSchewe21}, and that parity games have memoryless winning strategies~\cite{Zielonka1998-parity-game}. Here, with the notation of the proof of \cref{thm:values-automatic}, the set of vertices $S'$ of this parity game $\aut{H}$ is the product of $S$ and $\pi$, such that the symbols of $\pi$ constrain the first component of the states of $S$. This leads to a set $S'$ of size $|V| \cdot |\aut{A}| \cdot |\pi|$. Moreover, a memoryless winning strategy in $\aut{H}$ becomes a finite-memory strategy in $\aut{G}$ such that its memory size is $|\aut{A}| \cdot |\pi|$.

    To prove the parity-hardness of the threshold problem, we use a reduction from the problem of deciding whether player~$1$ has a winning strategy in a parity game. This reduction is very close to the reduction proposed in~\cite[Theorem~4]{BruyereGrandmontRaskin25-mfcs}. Let $\mathcal{H}$ be a parity game with players~$1$ and $2$, an arena $\arena$ with $V$ as set of vertices, an initial vertex $v_0$, and a priority function $\alpha$. We construct a new game $\game = (\arena', \R)$ with the same players, whose arena $\arena'$ is a copy of $\arena$ with an additional vertex $v_0'$ owned by player~$1$, with $v_0$ and itself as successors (see \cref{fig:outcome-checking-parity-reduction}). Given $V' = V \cup \{v'_0\}$, the preference relation $\R$ is defined as follows: $x \R y$ if $x = v_0'^\omega$ and $y = (v_0')^m y'$, where $m \geq 0$ and $y'$ is a play in $\aut{H}$ starting at $v_0$ and satisfying the parity condition. A \DPW{} $\aut{A}$ accepting $\R$ is depicted in \cref{fig:outcome-checking-parity-reduction}, it is constructed with a copy of the arena $\arena$ and a new state $q_0$ with priority~1. Finally, we choose the lasso $(v_0')^\omega$ as threshold. 

    Let us prove that the (polynomial) reduction is correct. Suppose first that there exists a strategy $\sigma_1$ solution to the threshold problem. This means that for each play $\rho \in \plays(v'_0)$ consistent with $\sigma_1$, we have $\pi \R \rho$. Thus, by definition of $\R$, $\rho$ is equal to $(v'_0)^m\rho'$ with $\rho'$ a winning play in $\mathcal{H}$. Hence, from $\sigma_1$, we can derive a winning strategy for player~1 in the parity game $\aut{H}$. The other direction is proved similarly: if player~$1$ has a winning strategy in $\aut{H}$, then transferring this strategy to $\game$ gives a solution to the threshold problem.

    \begin{figure}[t] 
        \centering
        \begin{tikzpicture}[x=0.75pt,y=0.75pt,yscale=-1,xscale=1] 
            \draw (258.31,117.01) .. controls (258.31,112.38) and (261.86,108.63) .. (266.25,108.63) .. controls (270.64,108.63) and (274.19,112.38) .. (274.19,117.01) .. controls (274.19,121.63) and (270.64,125.38) .. (266.25,125.38) .. controls (261.86,125.38) and (258.31,121.63) .. (258.31,117.01) -- cycle;
            \draw (266.51,96.18) -- (266.31,105.63);
            \draw [shift={(266.25,108.63)}, rotate = 271.21,fill={rgb,255:red,0;green,0;blue,0},line width=0.08,draw opacity=0] (5.36,-2.57) -- (0,0) -- (5.36,2.57) -- (3.56,0) -- cycle;
            \draw [fill={rgb, 255:red, 0; green, 0; blue, 0 }  ,fill opacity=0.03 ,dash pattern={on 4.5pt off 4.5pt}] (147.06,112.15) .. controls (147.06,107.43) and (150.88,103.61) .. (155.59,103.61) -- (189.65,103.61) .. controls (194.37,103.61) and (198.19,107.43) .. (198.19,112.15) -- (198.19,141.83) .. controls (198.19,146.55) and (194.37,150.37) .. (189.65,150.37) -- (155.59,150.37) .. controls (150.88,150.37) and (147.06,146.55) .. (147.06,141.83) -- cycle;
            \draw (116.81,122.81) .. controls (116.81,118.42) and (120.36,114.87) .. (124.74,114.87) .. controls (129.12,114.87) and (132.67,118.42) .. (132.67,122.81) .. controls (132.67,127.19) and (129.12,130.74) .. (124.74,130.74) .. controls (120.36,130.74) and (116.81,127.19) .. (116.81,122.81) -- cycle;
            \draw (132.67,122.81) -- (155.6,122.89);
            \draw [shift={(158.6,122.9)}, rotate = 180,fill={rgb,255:red,0;green,0;blue,0},line width=0.08,draw opacity=0] (5.36,-2.57) -- (0,0) -- (5.36,2.57) -- (3.56,0) -- cycle;
            \draw (158.81,122.81) .. controls (158.81,118.42) and (162.36,114.87) .. (166.74,114.87) .. controls (171.12,114.87) and (174.67,118.42) .. (174.67,122.81) .. controls (174.67,127.19) and (171.12,130.74) .. (166.74,130.74) .. controls (162.36,130.74) and (158.81,127.19) .. (158.81,122.81) -- cycle;
            \draw (262.07,124.24) .. controls (257.55,142.42) and (273.21,142.97) .. (270.16,126.86);
            \draw [shift={(269.51,124.18)}, rotate = 73.97,fill={rgb,255:red,0;green,0;blue,0},line width=0.08,draw opacity=0] (5.36,-2.57) -- (0,0) -- (5.36,2.57) -- (3.56,0) -- cycle;
            \draw (274.19,117.01) -- (321.31,116.7);
            \draw [shift={(324.31,116.68)}, rotate = 180,fill={rgb,255:red,0;green,0;blue,0},line width=0.08,draw opacity=0] (5.36,-2.57) -- (0,0) -- (5.36,2.57) -- (3.56,0) -- cycle;
            \draw (121.37,115.14) .. controls (119.1,98.13) and (130.4,96.42) .. (128.28,112.04);
            \draw [shift={(127.77,114.94)}, rotate = 282.09,fill={rgb,255:red,0;green,0;blue,0},line width=0.08,draw opacity=0] (5.36,-2.57) -- (0,0) -- (5.36,2.57) -- (3.56,0) -- cycle;
            \draw [fill={rgb, 255:red, 0; green, 0; blue, 0 }  ,fill opacity=0.03 ,dash pattern={on 4.5pt off 4.5pt}] (316.76,112.94) .. controls (316.76,107.62) and (321.07,103.31) .. (326.39,103.31) -- (388.23,103.31) .. controls (393.55,103.31) and (397.86,107.62) .. (397.86,112.94) -- (397.86,146.45) .. controls (397.86,151.77) and (393.55,156.08) .. (388.23,156.08) -- (326.39,156.08) .. controls (321.07,156.08) and (316.76,151.77) .. (316.76,146.45) -- cycle;
            \draw (323.97,117.31) .. controls (323.97,112.92) and (327.53,109.37) .. (331.91,109.37) .. controls (336.29,109.37) and (339.84,112.92) .. (339.84,117.31) .. controls (339.84,121.69) and (336.29,125.24) .. (331.91,125.24) .. controls (327.53,125.24) and (323.97,121.69) .. (323.97,117.31) -- cycle;
            \draw (378.31,120.16) .. controls (378.31,115.78) and (381.86,112.23) .. (386.24,112.23) .. controls (390.62,112.23) and (394.17,115.78) .. (394.17,120.16) .. controls (394.17,124.54) and (390.62,128.09) .. (386.24,128.09) .. controls (381.86,128.09) and (378.31,124.54) .. (378.31,120.16) -- cycle;
            \draw (342.55,126.01) .. controls (347.78,119.66) and (361.55,120.07) .. (375.42,120.15);
            \draw [shift={(378.31,120.16)}, rotate = 180,fill={rgb,255:red,0;green,0;blue,0},line width=0.08,draw opacity=0] (5.36,-2.57) -- (0,0) -- (5.36,2.57) -- (3.56,0) -- cycle;
            \draw (117.2,114.53) node [anchor=north west,inner sep=0.75pt,font=\small,align=left] {$v'_{0}$};
            \draw (235.79,101.77) node [anchor=north west,inner sep=0.75pt,align=left] {$\aut{A}$};
            \draw (180.16,131.67) node [anchor=north west,inner sep=0.75pt,align=left] {$\aut{H}$};
            \draw (259.5,112.63) node [anchor=north west,inner sep=0.75pt,font=\small,align=left] {$q_{0}$};
            \draw (159.6,119.13) node [anchor=north west,inner sep=0.75pt,font=\small,align=left] {$v_{0}$};
            \draw (96.09,101.27) node [anchor=north west,inner sep=0.75pt,align=left] {$\game$};
            \draw (275.23,100.8) node [anchor=north west,inner sep=0.75pt,font=\small]  {$(v'_{0},v_{0})$};
            \draw (245.23,137) node [anchor=north west,inner sep=0.75pt,font=\small,align=left] {$(v'_{0},v'_{0})$};
            \draw (378.46,138.17) node [anchor=north west,inner sep=0.75pt,align=left] {$\aut{H}$};
            \draw (324.76,113.63) node [anchor=north west,inner sep=0.75pt,font=\small,align=left] {$v_{0}$};
            \draw (341.87,124.2) node [anchor=north west,inner sep=0.75pt,font=\small,align=left] {$(v'_{0},v)$};
            \draw (381.7,117.28) node [anchor=north west,inner sep=0.75pt,font=\small,align=left] {$v$};
        \end{tikzpicture}
        \caption{The game $\game$ and the \DPW{} $\aut{A}$ accepting $\R$ for the \parityHard{}ness of \cref{thm:threshold-problem}.}
        \label{fig:outcome-checking-parity-reduction}
    \end{figure}

    \subparagraph*{Second problem}
    We have to decide whether $\val_1(v) \neq \varnothing$. The \pspace-membership follows from the recognizability of $\val_1(v)$ by some polynomial size \APW{} (\cref{thm:values-automatic}) and the \pspace{}-membership of \APW{} non-emptiness problem~\cite{DaxKlaedtke08,Leucker-Sanchez-2010}: one can construct on the fly an \NBW{} of exponential size~\cite{DaxKlaedtke08}, while checking its non-emptiness with the classical \nl{} algorithm.

    To prove \pspaceHard{}ness, we reduce the \NBW{} universality problem, which is \pspaceComplete{}~\cite{handbook-automata-kupferman18}, to the negation of our problem, i.e., whether $\val_1(v) = \varnothing$. Let $\aut{B} = (Q,q_0,\delta,F)$ be an \NBW{} over the alphabet $\Sigma$. We define a two-player zero-sum game $\game = (\arena, \R)$. Its arena $\arena$, depicted in \cref{fig:reduction-verifthreshold}, is constructed as follows. The set $V$ of vertices is equal to $\{v_0,v_1\} \cup \Sigma \cup Q$, such that $V_1 = \{v_0\}$ and $V_2 = V \setminus V_1$. From $v_0$, player~$1$ can move to $v_1$ or $q_0 \in Q$; from $v_1$ or a letter $a \in \Sigma$ (resp.\ a state $q \in Q$), player~$2$ can move to any letter $a' \in \Sigma$ (resp.\ any state $q' \in Q$).
    The relation $\R$ is defined as follows: $x \R y$ holds if $x = v_0 v_1 w$ with $w \in \Sigma^\omega$, $y = v_0 r$ with $r \in Q^\omega$, and $r$ is not an accepting run for $w$. This relation is accepted by the \DPW{} $\aut{A}$ depicted in \cref{fig:reduction-verifthreshold}, such that there exists a transition $q \xrightarrow{(a,q')} q'$ in $\aut{A}$ if and only if there exists a transition $q \xrightarrow{a} q'$ in $\aut{B}$, and a state $q$ has priority 1 (resp.\ 2) in $\aut{A}$ if and only if $q \not \in F$ (resp.\ $q \in F$) in $\aut{B}$. By definition of $\R$, observe that
    \[
    \val_1(v_0) = \{v_0v_1w \in v_0v_1\Sigma^\omega \mid \forall r \in Q^\omega, r \text{ is not an accepting run for } w\} = v_0 v_1 (\Sigma^\omega \setminus \lang(\aut{B})).
    \]
    This holds because player~$1$ can move to $q_0$ to let player~$2$ generate an arbitrary word $r \in Q^\omega$. Thus, $\val_1(v_0) = \varnothing$ if and only if $\lang(\aut{B}) = \Sigma^\omega$. This equivalence establishes the correctness of the polynomial reduction.

\begin{figure}
    \centering
    \begin{tikzpicture}[x=0.75pt,y=0.75pt,yscale=-1]
        \draw [fill={rgb, 255:red, 155; green, 155; blue, 155 },fill opacity=0.1 ,dash pattern={on 4.5pt off 4.5pt},line width=0.75] (212.12,96.75) .. controls (212.12,92.29) and (215.74,88.67) .. (220.21,88.67) -- (251.25,88.67) .. controls (255.71,88.67) and (259.33,92.29) .. (259.33,96.75) -- (259.33,121.02) .. controls (259.33,125.49) and (255.71,129.11) .. (251.25,129.11) -- (220.21,129.11) .. controls (215.74,129.11) and (212.12,125.49) .. (212.12,121.02) -- cycle;
        \draw (166.23,134.71) -- (181.49,126.49);
        \draw [shift={(184.13,125.07)}, rotate = 151.7,fill={rgb,255:red,0;green,0;blue,0},line width=0.08,draw opacity=0] (5.36,-2.57) -- (0,0) -- (5.36,2.57) -- (3.56,0) -- cycle;
        \draw (167.56,140.64) -- (211.26,153.02);
        \draw [shift={(214.15,153.84)}, rotate = 195.82,fill={rgb,255:red,0;green,0;blue,0},line width=0.08,draw opacity=0] (5.36,-2.57) -- (0,0) -- (5.36,2.57) -- (3.56,0) -- cycle;
        \draw (133.82,139.33) -- (143.82,139.21);
        \draw [shift={(146.82,139.17)}, rotate = 179.29,fill={rgb,255:red,0;green,0;blue,0},line width=0.08,draw opacity=0] (5.36,-2.57) -- (0,0) -- (5.36,2.57) -- (3.56,0) -- cycle;
        \draw [fill={rgb, 255:red, 155; green, 155; blue, 155 },fill opacity=0.1 ,dash pattern={on 4.5pt off 4.5pt},line width=0.75] (212.45,146.83) .. controls (212.45,142.43) and (216.02,138.87) .. (220.41,138.87) -- (251.71,138.87) .. controls (256.1,138.87) and (259.67,142.43) .. (259.67,146.83) -- (259.67,170.71) .. controls (259.67,175.1) and (256.1,178.67) .. (251.71,178.67) -- (220.41,178.67) .. controls (216.02,178.67) and (212.45,175.1) .. (212.45,170.71) -- cycle;
        \draw [fill={rgb, 255:red, 155; green, 155; blue, 155 },fill opacity=0.1 ,dash pattern={on 4.5pt off 4.5pt},line width=0.75] (423.35,117.78) .. controls (423.35,109.76) and (429.85,103.26) .. (437.87,103.26) -- (505.95,103.26) .. controls (513.97,103.26) and (520.47,109.76) .. (520.47,117.78) -- (520.47,161.35) .. controls (520.47,169.37) and (513.97,175.87) .. (505.95,175.87) -- (437.87,175.87) .. controls (429.85,175.87) and (423.35,169.37) .. (423.35,161.35) -- cycle;
        \draw (306.15,123.84) .. controls (306.15,119.57) and (309.61,116.12) .. (313.87,116.12) .. controls (318.14,116.12) and (321.6,119.57) .. (321.6,123.84) .. controls (321.6,128.11) and (318.14,131.56) .. (313.87,131.56) .. controls (309.61,131.56) and (306.15,128.11) .. (306.15,123.84) -- cycle;
        \draw (293.15,124) -- (303.15,123.88);
        \draw [shift={(306.15,123.84)}, rotate = 179.29,fill={rgb,255:red,0;green,0;blue,0},line width=0.08,draw opacity=0] (5.36,-2.57) -- (0,0) -- (5.36,2.57) -- (3.56,0) -- cycle;
        \draw (321.6,123.84) -- (361.6,123.69);
        \draw [shift={(364.6,123.68)}, rotate = 179.79,fill={rgb,255:red,0;green,0;blue,0},line width=0.08,draw opacity=0] (5.36,-2.57) -- (0,0) -- (5.36,2.57) -- (3.56,0) -- cycle;
        \draw (364.6,123.68) .. controls (364.6,119.41) and (368.05,115.96) .. (372.32,115.96) .. controls (376.58,115.96) and (380.04,119.41) .. (380.04,123.68) .. controls (380.04,127.95) and (376.58,131.4) .. (372.32,131.4) .. controls (368.05,131.4) and (364.6,127.95) .. (364.6,123.68) -- cycle;
        \draw (380.26,124.17) -- (423.26,124.02);
        \draw [shift={(426.26,124.01)}, rotate = 179.8,fill={rgb,255:red,0;green,0;blue,0},line width=0.08,draw opacity=0] (5.36,-2.57) -- (0,0) -- (5.36,2.57) -- (3.56,0) -- cycle;
        \draw (197.24,113.81) -- (218.77,100.69);
        \draw [shift={(221.33,99.13)}, rotate = 148.66,fill={rgb,255:red,0;green,0;blue,0},line width=0.08,draw opacity=0] (5.36,-2.57) -- (0,0) -- (5.36,2.57) -- (3.56,0) -- cycle;
        \draw (426.26,124.01) .. controls (426.26,119.75) and (429.72,116.29) .. (433.99,116.29) .. controls (438.25,116.29) and (441.71,119.75) .. (441.71,124.01) .. controls (441.71,128.28) and (438.25,131.74) .. (433.99,131.74) .. controls (429.72,131.74) and (426.26,128.28) .. (426.26,124.01) -- cycle;
        \draw (437.6,155.35) .. controls (437.6,151.08) and (441.05,147.62) .. (445.32,147.62) .. controls (449.58,147.62) and (453.04,151.08) .. (453.04,155.35) .. controls (453.04,159.61) and (449.58,163.07) .. (445.32,163.07) .. controls (441.05,163.07) and (437.6,159.61) .. (437.6,155.35) -- cycle;
        \draw (493.26,155.35) .. controls (493.26,151.08) and (496.72,147.62) .. (500.99,147.62) .. controls (505.25,147.62) and (508.71,151.08) .. (508.71,155.35) .. controls (508.71,159.61) and (505.25,163.07) .. (500.99,163.07) .. controls (496.72,163.07) and (493.26,159.61) .. (493.26,155.35) -- cycle;
        \draw (453.6,155.51) -- (489.6,155.36);
        \draw [shift={(492.6,155.35)}, rotate = 179.76,fill={rgb,255:red,0;green,0;blue,0},line width=0.08,draw opacity=0] (5.36,-2.57) -- (0,0) -- (5.36,2.57) -- (3.56,0) -- cycle;
        \draw (152.15,138.17) .. controls (152.15,133.91) and (155.61,130.45) .. (159.87,130.45) .. controls (164.14,130.45) and (167.6,133.91) .. (167.6,138.17) .. controls (167.6,142.44) and (164.14,145.9) .. (159.87,145.9) .. controls (155.61,145.9) and (152.15,142.44) .. (152.15,138.17) -- cycle;
        \draw (197.47,120.4) -- (220,120.75);
        \draw [shift={(223,120.8)}, rotate = 180.9,fill={rgb,255:red,0;green,0;blue,0},line width=0.08,draw opacity=0] (5.36,-2.57) -- (0,0) -- (5.36,2.57) -- (3.56,0) -- cycle;
        \draw (183.84,113.81) -- (197.24,113.81) -- (197.24,127.21) -- (183.84,127.21) -- cycle;
        \draw (214.84,149.47) -- (228.24,149.47) -- (228.24,162.87) -- (214.84,162.87) -- cycle;
        \draw (228.12,103.8) node [anchor=north west,inner sep=0.75pt,align=left] {$\Sigma$};
        \draw (229.79,157.47) node [anchor=north west,inner sep=0.75pt,align=left] {$Q$};
        \draw (504.12,111.8) node [anchor=north west,inner sep=0.75pt,align=left] {$\aut{B}$};
        \draw (322.3,108.4) node [anchor=north west,inner sep=0.75pt,font=\footnotesize,align=left] {$(v_{0},v_{0})$};
        \draw (379.97,109.07) node [anchor=north west,inner sep=0.75pt,font=\footnotesize,align=left] {$(v_{1} ,q_{0})$};
        \draw (427.72,119.33) node [anchor=north west,inner sep=0.75pt,font=\footnotesize,align=left] {$q_{0}$};
        \draw (441.06,151.67) node [anchor=north west,inner sep=0.75pt,font=\footnotesize,align=left] {$q$};
        \draw (495.72,149.67) node [anchor=north west,inner sep=0.75pt,font=\footnotesize,align=left] {$q'$};
        \draw (455.3,139.4) node [anchor=north west,inner sep=0.75pt,font=\footnotesize,align=left] {$(a,q')$};
        \draw (153.39,134.67) node [anchor=north west,inner sep=0.75pt,font=\footnotesize,align=left] {$v_{0}$};
        \draw (184.06,117) node [anchor=north west,inner sep=0.75pt,font=\footnotesize,align=left] {$v_{1}$};
        \draw (215.06,152.67) node [anchor=north west,inner sep=0.75pt,font=\footnotesize,align=left] {$q_{0}$};
    \end{tikzpicture}
    \caption{The game $\game$ and the \DPW{} $\aut{A}$ accepting $\R$ for the \pspace-hardness of \cref{thm:threshold-problem}.}
    \label{fig:reduction-verifthreshold}
\end{figure}

    \subparagraph*{Third problem}
    Let $\aut{M}$ be a Mealy machine defining $\sigma_1$. To prove \nl{}-membership, as the class \nl{} is closed under complementation, we provide an algorithm to decide whether $\sigma_1$ is not a solution to the threshold problem for $\pi$. That is, to check whether the language $L = \{\rho \in \plays_{\sigma_1}(v_0) \mid \pi \notR \rho\}$ is non-empty. From $\arena$, $\aut{M}$, $\pi$, and $\aut{A}$ (with the priorities increased by 1 to deal with~$\not\R$), it is easy to construct on the fly a \DPW{} $\aut{B}$ accepting $L$, whose size is linear, while performing a non-emptiness test in \nl{}~\cite{handbook-automata-kupferman18}.

    To establish \nlHard{}ness, we use a reduction from the \DBW{} universality problem, an \nlComplete{} problem~\cite{handbook-automata-kupferman18}. From a \DBW{} $\aut{B}$ over the alphabet $\Sigma$ and seen as a \DPW{}, we construct a game $\game = (\arena,\R)$ as follows. The set $V$ of vertices of $\arena$ is equal to $\{v_0,v_1\} \cup \Sigma$ such that $V_1 = \{v_0\}$ and $V_2 = V \setminus \{v_0\}$. 
    The set of edges is equal to $\{(v_0,v_0),(v_0,v_1)\} \cup \{(v_1,a) \mid a \in \Sigma\} \cup \Sigma \times \Sigma$. Hence, in $v_0$, player~$1$ can either loop on $v_0$ or go to $v_1$. From $v_1$, player~2 can generate any word of $\Sigma^\omega$. We define the relation~$\R$ such that $x \R y$ if $x = v_0^\omega$ and $y \in v_0v_1 \lang(\aut{B})$. Observe that $\R$ is $\omega$-automatic and is accepted by a \DPW{} with a size linear in $|\aut{B}|$. Finally, we set $\pi = v_0^\omega$ and $\sigma_1(v_0) = v_1$. Therefore, we have $\lang(\aut{B}) = \Sigma^\omega$ if and only if $\sigma_1$ is a solution to the threshold problem for $\pi$. This completes the proof since the proposed reduction is logspace.
\end{proof}

\section{Proof of \texorpdfstring{\cref{thm:optimal-strategy}}{Theorem~\ref{thm:optimal-strategy}}}
\label{app:optimalstrategies}

\thmoptimalstrategy*

\begin{proof}
    The proof is in three parts.

    \subparagraph*{First problem}
    We begin with the first result and establishing the \pspace{}-membership. Let $\aut{M}$ be a Mealy machine defining $\sigma_1$ and $v$ be a vertex. We consider the language
    \[
    L = \{(\pi,\rho) \in V^\omega \times V^\omega \mid (\pi \in \val_1(v) \land \rho \in \plays_{\sigma_1}(v)) \Rightarrow \pi \R \rho\}.
    \]
    As alternating automata are closed under Boolean operations, $L$ is accepted by an \APW{} constructed via the product of $\game$, $\aut{M}$, $\aut{A}$, and the \APW{} $\aut{B}$ accepting $\val_1(v)$, with a size linear in those components~\cite{handbook-automata-kupferman18}. Moreover, the language $L$ is universal, i.e., $L = V^\omega \times V^\omega$, if and only if $\sigma_1$ is optimal from $v$. Since the \APW{} universality problem is in \pspace{} (as done for the non-emptiness for \APWs{} in the proof of \cref{thm:threshold-problem}), the upper bound follows.

    For the \pspaceHard{}ness, we use the same reduction as done in the proof of the second result of \cref{thm:threshold-problem}. Given an \NBW{} $\aut{B}$, we construct the same game as depicted in \cref{fig:reduction-verifthreshold} and we consider the memoryless strategy $\sigma_1 \in \Gamma_1$ such that $\sigma_1(v_0) = q_0$. We can see that $\sigma_1$ is optimal from $v_0$ if and only if $\lang(\aut{B}) = \Sigma^\omega$.

    \subparagraph*{Second problem, membership}
    We now proceed to the proof of the second result and first show the \twoExptime-membership. By definition, an optimal strategy $\sigma_1^* \in \Gamma_1$ from $v$ must ensure that any play $\rho \in \plays(v)$ consistent with $\sigma_1^*$ belongs to the language:
    \[
        L = \{\rho \in \plays(v) \mid \forall \pi \in \val_1(v), \pi \R \rho\}.
    \]
    Therefore, deciding the existence of an optimal strategy reduces to deciding whether player~$1$ has a winning strategy in the game played on $\arena$ with the objective $L$. We construct a \DPW{} $\aut{C}$ accepting $L$ through the following steps:
    \begin{enumerate}
        \item We consider the language $\{(\pi,\rho) \in V^\omega \times V^\omega \mid \pi \not\in \val_1(v) ~\vee~ \pi \R \rho\}$. As \APW{} are closed under Boolean operations, we construct an \APW{} $\aut{C}_1$ accepting the language $L_1$ from the complement of the \APW{} $\aut{B}$ of \cref{thm:values-automatic} and the \DPW{} $\aut{A}$ accepting $\R$. The automaton $\aut{C}_1$ has size polynomial in $|V|$ and $|\aut{A}|$, and index $|\alpha|+1$.
         
        \item As \APW{} are closed under Boolean operations, we construct an \APW{} $\aut{C}_1$ accepting the language $L_1 = \{(\pi,\rho) \in V^\omega \times V^\omega \mid \pi \in \val_1(v) \Rightarrow \pi \R \rho\}$ from the \APW{} $\aut{B}$ of \cref{thm:values-automatic} and the \DPW{} $\aut{A}$ accepting $\R$. The automaton $\aut{C}_1$ has size polynomial in $|V|$ and $|\aut{A}|$, and index $|\alpha|$.

        \item We transform the \APW{} $\aut{C}_1$ into an \NBW{} $\aut{C}_2$ accepting the complement of $L_1$ equal to $L_2 = \{(\pi,\rho) \in V^\omega \times V^\omega \mid \pi \in \val_1(v) \land \pi \notR \rho\}$. The size of $\aut{C}_2$ is exponential in $|\aut{C}_1|$ (specifically $2^{\bigO{nk\log(n)}}$, with $n$ the size and $k$ the index of $\aut{C}_1$~\cite{DaxKlaedtke08}). Therefore, $C_2$ has a size exponential in $|V|$, $|\aut{A}|$, and $|\alpha|$.

        \item We project $\aut{C}_2$ onto the $\rho$-component, to obtain an \NBW{} $\aut{C}_3$ accepting the language $L_3 = \{\rho \in V^\omega \mid \exists \pi \in \val_1(v), \pi \notR \rho\}$, with the same size as $\aut{C}_2$.

        \item To get the required \DPW{} $\aut{C}$ for $L$, we transform $\aut{C}_3$ into a \DPW{} using the Safra-Piterman construction~\cite{Piterman07},
        and take its complement. The resulting automaton $\aut{C}$ has size $2^{\bigO{n \log(n)}}$ and index $O(n)$ where $n$ is the size of $\aut{C}_3$. Therefore, $\aut{C}$ has a size (resp.\ an index) doubly exponential (resp.\ exponential) in $|V|$, $|\aut{A}|$, and $|\alpha|$.
    \end{enumerate}

    To decide whether player~$1$ has a winning strategy in the game played on $\arena$ with the objective~$L$, we finally construct the product game $\game' = \arena \times \aut{C}$, which is a parity game whose priority function is given by the one of $\aut{C}$. Recall that parity games can be solved in quasi-polynomial time (specifically in $n^{\bigO{\log k}}$ where $n$ is the size of the game and $k$ its index~\cite{Calude-Jain-Khoussainov-Li-Stephan2022}). Therefore, due to the size and index of $\aut{C}$, the game $\game'$ can be solved in  \twoExptime{}, leading to a \twoExptime-membership for deciding the existence of an optimal strategy.

    Notice that as parity games can be solved with memoryless strategies, if an optimal strategy $\sigma_1^*$ exists, then there is one that is memoryless in $\game'$, thus with a memory size doubly exponential in $|V|$, $|\aut{A}|$, and $|\alpha|$.

    \subparagraph*{Second problem, hardness}
    It remains to prove \pspaceHard{}ness for the second problem. We use a reduction from the non-emptiness problem of the intersection of $n$ \DFAs{} $\aut{D}_i = (Q_i, q_0^i, \delta_i,F_i)$, over a common alphabet $\Sigma$, which is known to be \pspaceComplete{}~\cite{Kozen77}. We assume that $\lang(\aut{D}_i) \neq \varnothing$ for each $i$ as otherwise the intersection of the $n$ \DFAs{} is empty. This can be checked in polynomial time.

    Given the \DFAs{}, we construct a one-player game $\game = (\arena,\R)$ with $V_2=\varnothing$ as follows. For simplicity, the arena $\arena$ is edge-labelled over some alphabet $\Sigma'$ and the relation $\R$ is expressed as a language over $(\Sigma' \times \Sigma')^\omega$. The alphabet is $\Sigma' = \{0,1,\dots,n,\bot\}\cup\Sigma$. Then, the set of vertices is $V = V_1 = \{v_0,v\}$. For all $i\in\{0,1,\dots,n\}$, there is an edge from $v_0$ to $v$ labeled $i$, and for all $a\in\Sigma\cup\{\bot\}$, there is an edge from $v$ to $v$ labeled $a$. So, a strategy for player~$1$ from $v_0$ is an infinite word of the form $i z$ with $i\in \{0,1,\dots,n\}$ and $z\in(\Sigma\cup \{\bot\})^\omega$. The preference relation is defined
    as $x \R y$ whenever $x = i\bot^\omega$ for some $i\in\{1,\dots,n\}$ and $y = 0w_i\bot^\omega$ for some $w_i\in \lang(\aut{D}_i)$.

    Let us prove that $\bigcap_i \lang(\aut{D}_i) \neq \varnothing$ if and only if there exists an optimal strategy from $v_0$ for player~1.
    First, by a direct application of the definition of $\val_1(v_0)$ and the fact that the \DFAs{} $\aut{D}_i$ have non-empty languages, we get $\val_1(v_0) = \{ i\bot^\omega\mid i\in\{1,\dots,n\}\}$. Second, a strategy (i.e., a word $z\in\Sigma'^\omega$) is optimal if, by definition, for all $i\bot^\omega\in \val_1(v_0)$, $i\bot^\omega\R z$. Therefore, by definition of $\R$, such a strategy exists if and only if there exists $w\in\Sigma^*$ such that $z = 0w\bot^\omega$ and $w\in \bigcap_i \lang(\aut{D}_i)$. 

    \newcommand{\vp}[2]{\begin{pmatrix}#1\\#2\end{pmatrix}}
    \newcommand{\vps}[2]{(\begin{smallmatrix}#1\\#2\end{smallmatrix})}
    It remains to show that $\R$ is accepted by a \DPW{} $\aut{A}$ which can be constructed in polynomial time. This follows easily by observing that the language representation of $\R$ consists of words of the form
    \[
    \vp{i}{0}\vp{\bot}{a_1}\vp{\bot}{a_2}\ldots\vp{\bot}{a_k}\vp{\bot}{\bot}^\omega\text{ such that $i\in\{1,\dots,n\}$ and $a_1a_2\dots a_k\in \lang(\aut{D}_i)$.}
    \]
    So, the first letter that $\aut{A}$ is allowed to read must be of the form $\vps{i}{0}$ for any $i\in\{1,\dots,n\}$ (otherwise there is no transition). Then, the subsequent letters must be of the form $\vps{\bot}{a}$ for $a\in\Sigma$, and the \DPW{} simulates the \DFA{} $\aut{D}_i$ on the $\Sigma$-projection. Eventually, it checks that the infinite sequence ends with $\vps{\bot}{\bot}^\omega$, and that the sequence of $\Sigma$-letters is accepted by $\aut{D}_i$. So, the size of $\aut{A}$ is the sum of the sizes $|\aut{D}_i|$ and some constant. 

    Note that the arena $\arena$ has only two vertices but $n+|\Sigma|+2$ labels. It is possible to transform it into a proper arena (without edge labels) with $n+|\Sigma|+3$ vertices (the set $\{v_0\} \cup \{u_a \mid a \in \Sigma'\}$).
\end{proof}

\begin{remark}
    We comment on the \twoExptime{}-membership presented in the previous proof. We get an \exptime{}-membership if the constructed \APW{} $\aut{C}_1$ is good for games (GFG). Indeed, we can check whether $\aut{C}_1$ is GFG in \exptime{}~\cite[Theorem 19]{BokerKLS20}, and if it is the case, we can construct the \DPW{} $\aut{C}$, directly from $\aut{C}_1$, with a size exponential in $|V|$ and $|\aut{A}|$, and with index $|\alpha|$~\cite[Theorem 14]{BokerKLS20}.
    \lipicsEnd
\end{remark}

\section{Proof of \texorpdfstring{\cref{thm:crs-pspace-ncrs-undec}}{Theorem~\ref{thm:crs-pspace-ncrs-undec}}}
\label{app:rational-synthesis}

\thmrs*

\medskip
Both results are established via separate proofs.

\begin{proof}[Proof (First problem).]
    Solving the CRS problem is very close to solving the NE threshold problem. First, we need a counterpart of \cref{thm:apw-for-ne-outcome} adapted to $0$-fixed NEs. It is easy to verify that $\pi$ is the outcome of a $0$-fixed NE from $v$ if and only if $\pi \in \bigcap_{i \in \players\ssetminus \{0\}}\val_{-i}(v)$ (player~$0$ is excluded from the intersection because we only consider deviations from the environment). Hence, as for \cref{thm:apw-for-ne-outcome}, there exists an \APW{} accepting the set $0\NEout(v)$ of outcomes of all $0$-fixed NEs from $v$. Second, as done in the proof of \cref{thm:NEthreshold}, we restrict $0\NEout(v)$ to $L_0 = \{\rho \in V^\omega \mid \pi \R_0 \rho\}$, so that there exists a solution to the CRS problem if and only if $0\NEout(v) \cap L_0 \neq \varnothing$. The latter intersection is again accepted by a polynomial size \APW{} whose emptiness can be checked in \pspace{}.

    The \pspaceHard{}ness follows via a reduction from the NE existence problem (see \cref{thm:NEthreshold}). Let $\game = (\arena, (\R_i)_{i \in \players})$ be a game. We construct a new game $\game' = (\arena, (\R_i)_{i \in \players \cup \{0\}})$, with the same arena $\arena$ but with a new player, player~$0$, who controls no vertex. We define the $\omega$-automatic relation $\mathord{\R_0} = V^\omega \times V^\omega$, and set the threshold $\pi$ to be an arbitrary lasso in the arena. Consequently, any NE in $\game$ is also a $\sigma_0$-fixed NE in $\game'$ (for $\sigma_0$ arbitrary as $V_0 = \varnothing$), and the threshold constraint over $\R_0$ is trivially satisfied by all plays. It follows that the CRS problem is \pspaceHard{}.
\end{proof}

\begin{proof}[Proof (Second problem).]
    The proof for the undecidability of the NCRS problem is a reduction from the \emph{recurrent infinite post correspondence problem} ($\omega$-rPCP), a variant of the infinite post correspondence problem~\cite{Gire86}, known to be $\Sigma_1^1$-complete~\cite{Harel86}, thus undecidable. A sketch of proof containing the beginning of the construction is given in \cref{section:rational-synthesis}. We detail here the formal construction and correctness establishing the $\Sigma_1^1$-hardness. The $\omega$-rPCP problem is to decide, given two alphabets $A, B$, two morphisms $\varphi_L,\varphi_R: A \to B^*$, and a letter $\bar a \in A$, whether there exists an infinite word $w \in A^\omega$ such that $\varphi_L(w) = \varphi_R(w)$ and $\bar a$ appears infinitely often in $w$. We show how to construct a two-player game $\game = (\mathsf{Arn}, (\R_0,\R_1))$ with initial vertex $v$ and a lasso play $\pi$ in $\mathsf{Arn}$, such that there exists a solution to $\omega$-rPCP if and only if there exists a solution to the NCRS problem in $\game$ for initial vertex $v$ and threshold $\pi$.
    
    \subparagraph*{Arena} Formally, we define the arena $\mathsf{Arn} = (V,E,\{0,1\},(V_0,V_1))$ as follows. Let $\varphi_i^j(a)$ denote the $j$-th letter of $\varphi_i(a)$, with $i \in \{L,R\}$, $a \in B$, and $1 \leq j \leq |\varphi_i(a)|$. Then:
    \begin{itemize}
        \item $V = \{\#\} \cup A \cup \{\varphi_i^j(a) \mid i \in \{L,R\}, a \in A, 1 \leq j \leq |\varphi_i(a)|\} \cup \{L_a,R_a \mid a \in A\} \cup B$,
        \item $V_1 = A$, and $V_0 = V \ssetminus V_1$,
        \item for every $a \in A$, $i \in \{L,R\}$, and $b,b' \in B$, the edges are:
        \begin{itemize}
            \item $(\#,a)$, $(a,\#)$,
            \item $(a,L_a)$, $(a,R_a)$,
            \item $(L_a,\varphi_L^1(a))$, $(R_a,\varphi_R^1(a))$,
            \item $(\varphi_i^j(a), \varphi_i^{j+1}(a))$, for $1 \leq j < |\varphi_i(a)|$,
            \item $(\varphi_i^{n}(a),b)$, for $n = |\varphi_i(a)|$,
            \item $(b,b')$.
        \end{itemize}
    \end{itemize}
    Note that if $\varphi_i(a)$ is the empty word, then we connect $L_a$ (resp.\ $R_a$) directly to $B$ with the edges $(L_a,b)$ (resp.\ $ (R_a,b)$). For clarity and to avoid heavy notations, in the definition of $V$, we assume that vertices for letters of $\varphi_i(a)$ and letters of $B$ are distinct, even though $\varphi_i(a) \in B^*$. For example, if $B=\{a,b\}$ and $\varphi_L(x)=aba$ for some $x\in A$, we assume that the three occurrences of $a$ (one in $B$ and two in $\varphi_L(x)$) yield three distinct vertices.
    
    \subparagraph*{Preference relations} Before defining the preference relations $\R_0$ and $\R_1$, let us study the behavior of player~$1$ when player~$0$ fixes a strategy $\sigma_0$. Given $\sigma_0$, we can see all the plays consistent with $\sigma_0$ as an infinite tree as illustrated in \cref{fig:arena-unravelling-undecidability-ncrs}. Each vertex of player~$1$ has the same three choices as presented above. The central branch corresponds to the (first) choice of a word $a = a_1a_2\dots \in A^\omega$. Whenever player~$1$ decides to go to $L_{a_k}$ (resp.\ $R_{a_k}$) after history $\# a_1\ldots \# a_k$ with $k\geq 1$, player~$0$ selects a word $\varphi_L(a_k) \cdot u_k\in B^\omega$ (resp.\ $\varphi_R(a_k) \cdot v_k \in B^\omega$). In this second (resp.\ third) choice, the branch that is followed is called $k$-left branch (resp.\ $k$-right branch).
    
    From the arena $\mathsf{Arn}$, we construct a game $\game = (\mathsf{Arn}, (\R_0,\R_1))$ where $\R_0 = \varnothing$ is the empty relation. The core of the reduction lies in the definition of $\R_1$. For all $a_1,a_2\dots \in A$, $w \in B^\omega$, $k \in \N$, we define $\R_1$ as follows:
    \begin{enumerate}
        \item\label{item:thmundecidable:cond1} $\# a_1 L_{a_1} w \R_1 \# a_1 R_{a_1} w$,
        \item\label{item:thmundecidable:cond2} $\# a_1\dots \# a_{k+1} L_{a_{k+1}} w \R_1 \# a_1\dots \# a_{k} L_{a_k} \varphi_L(a_k) w$,
        \item\label{item:thmundecidable:cond3} $\# a_1\dots \# a_{k} R_{a_k} \varphi_R(a_k) w \R_1 \# a_1\dots \# a_{k+1} R_{a_{k+1}} w$,
        \item\label{item:thmundecidable:cond4} $\# a_1 \# a_2 \# a_3 \dots \R_1 \# a_1 L_{a_1} w$ if there are infinitely many $i$ such that $a_i = \bar a$.
    \end{enumerate}
    Note that in the definition of each $x \R_1 y$, the same $w$, and $a_1 \dots a_k$ appear in both $x,y$.
    
    Given a strategy $\sigma_0$ of player~$0$, $\R_1$ creates a circular chain of preferences that will later be used in the reduction.
    Returning to~\cref{fig:arena-unravelling-undecidability-ncrs}, \cref{item:thmundecidable:cond1} means that player~$1$ prefers the $1$-right branch to the $1$-left one; \cref{item:thmundecidable:cond2} states, on the left side, that player~$1$ prefers each $k$-left branch to the $(k+1)$-left one; \cref{item:thmundecidable:cond3} states, on the right side, that player~$1$ prefers each $(k+1)$-right branch to the $k$-right one; and \cref{item:thmundecidable:cond4} says that player~$1$ prefers the $1$-left branch to the central one if the central branch sees $\bar a$ infinitely often. In \cref{fig:arena-unravelling-undecidability-ncrs}, we depicted with red dashed arrows the preferences between the $2$-left, $1$-left, $1$-right, and $2$-right branches, assuming that each condition holds. An arrow from a branch $b$ to another branch $b'$ means that $b'$ is preferred to $b$.
    
    The relation $\R_1$ is $\omega$-automatic. Indeed, we can construct a \DPW{} separately for each of \cref{item:thmundecidable:cond1,item:thmundecidable:cond2,item:thmundecidable:cond3,item:thmundecidable:cond4}, then build a generalized \DPW{} for their product, that we finally transform into a \DPW{}. Note that for~\cref{item:thmundecidable:cond2,item:thmundecidable:cond3}, the \DPW{} has to check the suffix equality (with respect to $w$). This is possible because each word $\varphi_i(a)$ has a finite length, and thus, the automaton can buffer the necessary finite sequence of letters from $w$ in its state space to verify the match.
    
    Finally, to complete the reduction, we use any play $\pi$ starting with $\#$ as threshold. Let us now prove that the reduction is correct. In particular, we show that there exists a solution to $\omega$-rPCP if and only if there exists a solution to the NCRS problem.
    
    \subparagraph*{Correctness: $\omega$-rPCP $\rightarrow$ non-cooperative rational synthesis} We first assume that there exists a solution $w = a_1a_2\dots \in A^\omega$ to $\omega$-rPCP. We must construct a strategy $\sigma_0$ for player~$0$ such that for all $\sigma_0$-fixed NEs $\bsigma$, we have $\pi \R_0 {\outcomefrom{\bsigma}{\#}}$. As $\R_0$ is the empty relation, we must show that there is no $\sigma_0$-fixed NE. We define the strategy $\sigma_0$ from $w =a_1a_2 \ldots $ as follows:
    \begin{itemize}
        \item player~$0$ generates the sequence $w$, i.e., $\sigma_0(\#) = a_1$ and $\sigma_0(\# a_1\dots a_k \#) = a_{k+1}$ for all $k \geq 1$;
        \item whenever player~1 chooses the edge $(a_k,L_{a_k})$ (resp.\ $(a_k,R_{a_k})$), $\sigma_0$ generates the suffix $\varphi_L(w_{\geq k}):= \varphi_L(a_ka_{k+1}\dots)$ (resp.\ $\varphi_R(w_{\geq k}):=\varphi_R(a_ka_{k+1}\dots$).
    \end{itemize} 
    
    Let us show that there is no $\sigma_0$-fixed NE. As $w$ is a solution to $\omega$-rPCP, we have $\varphi_L(w) = \varphi_R(w)$. Let $\sigma_1$ be any strategy of player~$1$ and consider the infinite tree of \cref{fig:arena-unravelling-undecidability-ncrs}.
    \begin{itemize}
        \item If $\outcomefrom{\sigma_0,\sigma_1}{\#}=\#a_1\#a_2\dots$ is the middle branch, then a deviation to the 1-left branch yields the outcome $\# a_1L_{a_1}\varphi_L(w)$ which is profitable for player~$1$ by~\cref{item:thmundecidable:cond4} as $\bar a$ occurs infinitely often in $w$.
        \item If $\outcomefrom{\sigma_0,\sigma_1}{\#}=\#a_1\dots \#a_kR_{a_k}\varphi_R(w_{\geq k})$ is the $k$-right branch, $k \geq 1$, then a deviation to the $(k+1)$-right branch yields the outcome $\#a_1\dots\#a_k\# a_{k+1}R_{a_{k+1}}\varphi_R(w_{\geq k+1})$ which is profitable by~\cref{item:thmundecidable:cond3}.
        \item If $\outcomefrom{\sigma_0,\sigma_1}{\#}=\#a_1\dots \#a_kL_{a_k}\varphi_L(w_{\geq k})$ is the $k$-left branch, $k \geq 2$, then a deviation to the $(k-1)$-left branch yields the outcome 
        $\#a_1\dots \#a_{k-1}L_{a_{k-1}}\varphi_L(w_{\geq k-1})$ which is
        profitable by~\cref{item:thmundecidable:cond2}. 
        \item If $\outcomefrom{\sigma_0,\sigma_1}{\#}=\#a_1L_{a_1}\varphi_L(w)$ is the $1$-left branch, then a deviation to the $1$-right branch yields
        the outcome $\#a_1R_{a_1}\varphi_R(w)$ which is profitable by~\cref{item:thmundecidable:cond1}.
    \end{itemize}
    
    In every case, player~$1$ has a profitable deviation. Thus, no $\sigma_0$-fixed NE exists, and $\sigma_0$ is a solution to the NCRS problem.
    
    \medskip
    
    \subparagraph*{Correctness: non-cooperative rational synthesis $\rightarrow$ $\omega$-rPCP} Conversely, let us suppose that $\sigma_0$ is a solution to the NCRS problem. As $\R_0$ is the empty relation, no outcome for $\sigma_0$ is preferred to $\pi$, and therefore, there exists no $\sigma_0$-fixed NE. Let $w = a_1a_2\dots$ be the sequence of letters chosen by $\sigma_0$ along the middle branch, see~\cref{fig:arena-unravelling-undecidability-ncrs}. Let us prove that $\bar a$ occurs infinitely often in $w$ and $\varphi_L(w) = \varphi_R(w)$, that is, $w$ is a solution to $\omega$-rPCP.
    
    Since there is no $\sigma_0$-fixed NE, for every strategy $\sigma_1$ of player~$1$, there always exists a profitable deviation. Consider a strategy $\sigma_1$ such that $\outcomefrom{\sigma_0,\sigma_1}{\#}$ is the middle branch, then there is a profitable deviation. The only profitable deviation is the $1$-left branch by~\cref{item:thmundecidable:cond4}, showing that $\bar a$ occurs infinitely often in $w$. Moreover, consider a strategy $\sigma_1$ such that $\outcomefrom{\sigma_0,\sigma_1}{\#}$ is the $k$-right branch $\# a_1 \dots \# a_{k}R_{a_{k}} \varphi_R(a_k) v_k$ for some $k \geq 1$. The only profitable deviation for this outcome is the $(k+1)$-right branch $\# a_1 \dots \# a_{k+1}R_{a_{k+1}} v_{k+1}$ given by~\cref{item:thmundecidable:cond3}. It follows that $v_k = \varphi_R(a_{k+1}) v_{k+1}$, and this holds for all $k\geq 1$. Hence, we have
    \begin{align*}
        \varphi_R(a_1) v_1 &= \varphi_R(a_1a_2) v_2 = \varphi_R(a_1a_2a_3) v_3 = \ldots = \varphi_R(a_1a_2a_3\dots).
    \end{align*}
    Similarly, considering strategies $\sigma_1$ whose outcomes are the $(k+1)$-left branches, by~\cref{item:thmundecidable:cond2}, we get $u_k = \varphi_L(a_{k}) u_{k+1}$ for all $k \geq 1$, and thus
    \begin{align*}
        \varphi_L(a_1) u_1 &= \varphi_L(a_1a_2) u_2 = \varphi_L(a_1a_2a_3) u_3 = \ldots = \varphi_L(a_1a_2a_3\dots).
    \end{align*}
    Finally, considering the strategy $\sigma_1$ with outcome equal to the 1-left branch, we get by~\cref{item:thmundecidable:cond1} that $\varphi_L(a_1) u_1 = \varphi_R(a_1) v_1$. 
    Therefore, $\varphi_L(a_1a_2a_3\dots) = \varphi_R(a_1a_2a_3\dots)$ showing that $w$ is a solution to $\omega$-rPCP.
\end{proof}

\end{document}